\newcommand{\bigmid}{~\big|~}
\DeclareMathOperator{\probab}{\mathbb{P}}
\DeclareMathOperator{\expected}{\mathbb{E}}
\DeclareMathOperator{\Exp}{\operatorname{Exp}}
\DeclareMathOperator{\trivial}{\mathsf{TRIVIAL}}
\DeclareMathOperator{\kmedian}{\mathsf{MEDIAN}}
\DeclareMathOperator{\greedy}{\mathsf{GREEDY}}
\DeclareMathOperator{\nn}{\mathsf{NN}}
\DeclareMathOperator{\instsp}{\mathsf{INSERT}}
\DeclareMathOperator{\mm}{\mathsf{MM}}
\DeclareMathOperator{\tsp}{\mathsf{TSP}}
\DeclareMathOperator{\cost}{\mathsf{cost}}
\newcommand{\nat}{\mathbb{N}}
\newcommand{\eps}{\varepsilon}
\newcommand{\dmax}{\Delta_{\operatorname{max}}}
\newtheorem{theorem}{Theorem}[section]
\newtheorem{lemma}[theorem]{Lemma}
\newtheorem{corollary}[theorem]{Corollary}
\newtheorem{claim}[theorem]{Claim}
\newcommand{\deltamin}{\delta_{\min}}
\title{Random Shortest Paths: Non-Euclidean Instances for Metric Optimization Problems\thanks{To appear in \emph{Algorithmica}. An extended abstract of this work
has appeared in the \emph{Proceedings of the 38th Int.\ Symp.\ on Mathematical Foundations of Computer Science (MFCS 2013)}.}}
\author[1]{Karl Bringmann\thanks{Karl Bringmann is a recipient of the \emph{Google Europe Fellowship in Randomized Algorithms}, and
   this research is supported in part by this Google Fellowship.}}
\author[2]{Christian Engels}
\author[3]{Bodo Manthey}
\author[4]{B.~V.~Raghavendra~Rao}
\affil[1]{Max Planck Institute for Informatics, Saarbr\"ucken, Germany, \texttt{kbringma@mpi-inf.mpg.de}}
\affil[2]{Saarland University, Saarbr\"ucken, Germany, \texttt{engels@cs.uni-saarland.de}}
\affil[3]{University of Twente, Enschede, Netherlands, \texttt{b.manthey@utwente.nl}}
\affil[4]{Indian Institute of Technology Madras, Chennai, India, \texttt{bvrr@cse.iitm.ac.in}}
\date{May 23, 2014}
\begin{document}

\maketitle

\begin{abstract}
Probabilistic analysis for metric optimization problems has mostly been conducted on random Euclidean instances, but little
is known about metric instances drawn from distributions other than the Eu\-clid\-e\-an.
This motivates our study of random metric instances for optimization problems obtained as follows:
Every edge of a complete graph gets a weight drawn independently at random. The distance between two nodes is then the length of a shortest path 
(with respect to the weights drawn) that connects these nodes.

We prove structural properties of the random shortest path metrics generated in this way. Our main structural contribution is the construction of a good clustering.
Then we apply these findings to analyze the approximation ratios
of heuristics for matching, the traveling salesman problem (TSP), and the $k$-median problem,
as well as the running-time of the 2-opt heuristic for the TSP.
The bounds that we obtain are considerably better than
the respective worst-case bounds. 
This suggests that random shortest path metrics are easy instances, similar to random Euclidean instances, albeit for completely different structural reasons.
\end{abstract}

\section{Introduction}
\label{sec:introduction}

For large-scale optimization problems, finding optimal solutions within
reasonable time is often impossible, because many such problems,
like the traveling salesman
problem (TSP), are NP-hard.
Nevertheless, we often observe that simple heuristics succeed surprisingly
quickly in finding close-to-optimal solutions.
Many such heuristics perform well in practice but have a poor worst-case performance.
In order to  explain the performance of such heuristics,
probabilistic analysis has proved to be a useful alternative to worst-case analysis.
Probabilistic analysis of optimization problems has
been conducted with respect to arbitrary instances (without the triangle
inequality)~\cite{Frieze04,Karp77}
or instances embedded in Euclidean space.
In particular, the limiting behavior of various heuristics for many of the
Euclidean optimization
problems is known precisely~\cite{Yukich1998}.

However, the average-case performance of heuristics for general metric instances is not well
understood. This lack of understanding can be explained by two reasons:
First, independent random edge lengths (without the triangle inequality) and random geometric instances are
relatively easy to handle from a technical point of view -- the former because of the independence of the lengths,
the latter because Euclidean space provides a structure that can be exploited.
Second, analyzing heuristics on random metric spaces requires an understanding of random
metric spaces in the first place. While Vershik~\cite{Ver04} gave an analysis
of a process for obtaining random metric spaces, using this directly to analyze algorithms seems difficult.

In order to initiate systematic research of heuristics on general metric spaces,
we use the following model, proposed by Karp and Steele~\cite[Section 3.4]{KarpSteele}:
given an undirected complete graph, we draw edge weights independently at random according to
exponential distributions with parameter one.
The distance between any two vertices is the total weight of the shortest path between them, measured with respect to the random weights.
We call such instances \emph{random shortest path metrics}.

This model is also known as \emph{first-passage percolation},
and has been introduced by Broadbent and Hammersley
as a model for passage of fluid in a porous
medium~\cite{BS10,BH57}.
More recently, it has also been used to model shortest paths in networks such as the Internet~\cite{EGRHN12}. 
The appealing feature of random shortest path metrics is their simplicity, which enables
us to use them for the analysis of heuristics.

\subsection{Known and Related Results}

There has been significant study of  random shortest path metrics or first-passage percolation.  The 
expected length of an edge is known to be 
$\ln n/n$~\cite{DavisPrieditis:ExpectedShortestPath:1993,Janson}.
Asymptotically the same bound holds
also for the longest edge almost surely~\cite{HassinZemel:ShortestPath:1985,Janson}.
These results hold not only for the exponential distribution, but for every distribution with distribution function $F$ satisfying
$F(x) = x + o(x)$ for small values of $x$~\cite{Janson}. (See also Section~\ref{sec:open}.)
This model has been used to analyze algorithms for computing shortest paths~\cite{HassinZemel:ShortestPath:1985,PeresEA,FriezeGrimmett}.
Kulkarni and Adlakha have developed algorithmic methods to compute distribution and moments of several
optimization problems~\cite{Kulkarni:MST:1988,KulkarniAdlakha:FlowExponential:1985,Kulkarni:SP:1986}.
Beyond shortest path algorithms, random shortest path metrics have been applied only rarely to analyze algorithms.
Dyer and Frieze~\cite{DyerFrieze}, answering a question raised by Karp and Steele~\cite[Section 3.4]{KarpSteele},
analyzed the patching heuristic for the asymmetric TSP (ATSP) in this model.
They showed that it comes within a factor of $1+o(1)$ of the optimal solution with
high probability.
Hassin and Zemel~\cite{HassinZemel:ShortestPath:1985} applied their findings to the $1$-center problem.

From a more structural point of view, first-passage percolation
has been analyzed in the area of complex networks, where
the hop-count (the number
of edges on a shortest path) and the length of shortest path
trees have been analyzed~\cite{Hofstad06}.
These properties have also been studied on random graphs with random edge
weights in various settings~\cite{HofstadEA:FirstPassageRandom:2001,ErdosFPP,FPPInhomogeneous,Universality,BhamidiEA:FirstPassageFinite:2010}.
Addario-Berry et al.~\cite{BBL10} have shown that the number of edges in the longest of the
shortest paths is $O(\log n)$ with high probability, and hence the shortest path trees have depth $O(\log n)$.

\subsection{Our Results}
As far as we are aware, simple
heuristics such as greedy heuristics have
not been studied in this model yet. Understanding the performance of such algorithms is particularly important as they are easy to implement and
used in many applications.

We provide a probabilistic analysis of simple heuristics for optimization
under random shortest path metrics.
First, we provide structural properties of 
random shortest path metrics (Section~\ref{sec:structure}). Our most important structural contribution is proving the existence 
of a good clustering (Lemma~\ref{lem:numberofclusters}).
Then we use these structural insights to analyze simple algorithms for 
minimum weight matching and the TSP to obtain better expected approximation ratios compared to the worst-case bounds.
In particular,  we show that the greedy algorithm for
minimum-weight perfect matching (Theorem~\ref{thm:greedy-exp}), the nearest-neighbor
heuristic for the TSP (Theorem~\ref{thm:nn-const}), and every insertion heuristic for the TSP (Theorem~\ref{thm:tsp-ins})
achieve constant expected approximation ratios. We also analyze the 
2-opt heuristic for the TSP
and show that the expected number of
2-exchanges required before the termination of the algorithm  is
bounded by $O(n^8\log^3 n)$ (Theorem~\ref{thm:2-opt}).
Investigating further the structural properties of random shortest path metrics, we then consider the $k$-median problem
(Section~\ref{sec:kmedian}), and show that the most trivial procedure
of choosing $k$ arbitrary vertices as $k$-median yields a $1+o(1)$ approximation in 
expectation, provided $k=O(n^{1-\eps})$ for some $\eps>0$ (Theorem~\ref{thm:kcentre}).

\section{Model and Notation}

We consider undirected complete graphs $G=(V,E)$ without loops.
First, we draw \emph{edge weights} $w(e)$ independently at random
according to the exponential distribution\footnote{Exponential distributions
are technically the easiest to handle because they are memoryless.
We will discuss other distributions in Section~\ref{sec:open}.} with parameter $1$.

Second, let the \emph{distances} $d: V \times V \to [0, \infty)$ be given as follows:
the distance $d(u,v)$ between $u$ and $v$ is the minimum total weight of a path
connecting $u$ and $v$.
In particular, we have $d(v,v) = 0$ for all $v \in V$,
$d(u,v) = d(v,u)$ because $G$ is undirected,
and the triangle inequality: $d(u,v) \leq d(u,x) + d(x,v)$
for all $u, x, v \in V$.
We call the complete graph with distances $d$ obtained from random weights $w$ a \emph{random shortest path metric}.

We use the following notation:
Let $\dmax = \max_{u,v} d(u,v)$ denote the \emph{diameter} of the random shortest path metric.
Let $B_\Delta(v) = \{u \in V \mid d(u,v) \leq \Delta\}$ be the
ball of radius $\Delta$ around $v$, i.e., the set of all nodes whose distance to $v$ is at most $\Delta$.

We denote the
minimal $\Delta$ such that there are at least $k$ nodes within a distance of $\Delta$ of $v$ by $\tau_{k}(v)$. Formally,
we define $\tau_k(v) = \min \{\Delta \mid |B_{\Delta}(v)| \geq k\}$.

By $\Exp(\lambda)$, we denote the exponential distribution with parameter $\lambda$. If a random variable~$X$ is distributed
according to a probability distribution $P$, we write $X \sim P$. In particular, $X \sim 
\sum_{i=1}^m \Exp(\lambda_i)$ means that $X$ is the sum of $m$ independent exponentially distributed random variables with
parameters $\lambda_1, \ldots, \lambda_m$.

By $\exp$, we denote the exponential function.
For $n \in \nat$, let $[n] = \{1,\dots,n\}$
and let $H_n = \sum_{i=1}^n 1/i$ be the $n$-th harmonic number. 

\section{Structural Properties of Shortest Path Metrics}
\label{sec:structure}

\subsection{Random Process}

To understand random shortest path metrics, it is convenient to fix a starting
vertex $v$ and see how the lengths from $v$ to the other vertices develop.
In this way, we analyze the distribution of~$\tau_k(v)$.

The values $\tau_k(v)$ are generated by a simple birth process as follows.
(The same process has been analyzed by Davis and Prieditis~\cite{DavisPrieditis:ExpectedShortestPath:1993},
Janson~\cite{Janson}, and also in subsequent papers.)
For $k=1$, we have $\tau_k(v) = 0$.

For $k \geq 1$, we are looking for the closest vertex to any
vertex in $B_{\tau_{k}(v)}(v)$ in order to obtain $\tau_{k+1}(v)$.
This conditions all edges $(u, x)$ with $u \in B_{\tau_{k}(v)}(v)$
and $x \notin B_{\tau_{k}(v)}(v)$ to be of length at least
$\tau_{k}(v) - d(v,u)$.
The set $B_{\tau_{k}(v)}(v)$ contains $k$ vertices. Thus, there are $k \cdot (n-k)$
edges to the rest of the graph.
Consequently, the difference $\delta_k = \tau_{k+1}(v) - \tau_{k}(v)$ is distributed as the minimum of $k(n-k)$ exponential random variables
(with parameter 1), or, equivalently, as $\Exp(k \cdot (n-k))$.
We obtain that $\tau_{k+1}(v) \sim \sum_{i=1}^{k} \Exp\bigl(i \cdot (n-i)\bigr)$.
Note that these exponential distributions as well as the random variables $\delta_1,
\ldots, \delta_n$ are independent.

Exploiting linearity of expectation and that the expected value of $\Exp(\lambda)$ is $1/\lambda$ we obtain the following lemma.

\begin{lemma}
\label{thm:expectations}
For any $k \in [n]$ and any $v \in V$, we have
$\expected\bigl(\tau_k(v) \bigr) = \frac 1n \cdot \bigl(H_{k-1} + H_{n-1} - H_{n-k}\bigr)$
and $\tau_k(v) \sim \sum_{i=1}^{k-1} \Exp\bigl(i \cdot (n-i)\bigr)$.
\end{lemma}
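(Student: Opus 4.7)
The plan is to read off both conclusions directly from the birth-process description given just before the lemma, with a short partial-fraction calculation for the expectation. The main conceptual work (namely, justifying that the increments $\delta_i = \tau_{i+1}(v) - \tau_i(v)$ are independent and distributed as $\Exp(i(n-i))$) has already been done in the paragraph preceding the statement, using the memorylessness of the exponential distribution and the fact that at the moment $B_{\tau_i(v)}(v)$ has size $i$ there are exactly $i(n-i)$ candidate edges leaving this ball.

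Given that setup, the distribution claim is immediate: telescope $\tau_k(v) = \sum_{i=1}^{k-1} \delta_i$ and observe that the $\delta_i$ are independent with $\delta_i \sim \Exp(i(n-i))$, which is exactly $\tau_k(v) \sim \sum_{i=1}^{k-1} \Exp(i(n-i))$.

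For the expectation, I would apply linearity together with $\expected(\Exp(\lambda)) = 1/\lambda$ to get
\[
\expected(\tau_k(v)) = \sum_{i=1}^{k-1} \frac{1}{i(n-i)}.
\]
The only nontrivial step is the partial-fraction identity $\frac{1}{i(n-i)} = \frac{1}{n}\bigl(\frac{1}{i} + \frac{1}{n-i}\bigr)$, which gives
\[
\expected(\tau_k(v)) = \frac{1}{n}\sum_{i=1}^{k-1}\frac{1}{i} + \frac{1}{n}\sum_{i=1}^{k-1}\frac{1}{n-i} = \frac{1}{n}\bigl(H_{k-1} + (H_{n-1} - H_{n-k})\bigr),
\]
where the second sum is re-indexed by $j = n-i$, producing the tail $H_{n-1} - H_{n-k}$ of the harmonic series.

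There is essentially no obstacle here; the argument is entirely a bookkeeping consequence of the birth-process construction preceding the lemma. The one thing worth being careful about is the base case $k=1$, where $\tau_1(v)=0$, the sum is empty, and the formula correctly gives $\tfrac{1}{n}(H_0 + H_{n-1} - H_{n-1}) = 0$, so the statement is consistent at the boundary.
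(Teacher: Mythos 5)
Your proposal is correct and matches the paper's argument in substance: both rest on the birth-process fact that $\tau_k(v)$ is a sum of independent $\Exp(i(n-i))$ increments, plus linearity and the identity $\frac{1}{i(n-i)} = \frac{1}{n}\bigl(\frac{1}{i}+\frac{1}{n-i}\bigr)$. The only cosmetic difference is that you sum directly and re-index, whereas the paper packages the same harmonic bookkeeping as an induction on $k$.
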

\begin{proof}
The proof is by induction on $k$. For $k=1$, we have $\tau_k(v) = 0$ and
$H_{k-1} + H_{n-1} - H_{n-k} = H_{0} + H_{n-1} - H_{n-1} = 0$.
Now assume that the lemma holds for $k$ for some $k \geq 1$.
In the paragraph preceding this lemma we have seen that $\tau_{k+1}(v) - \tau_k(v) \sim \Exp(k(n-k))$.
Thus, $\expected(\tau_{k+1}(v) - \tau_k(v)) = \frac 1{k(n-k)}$.
Plugging in the induction hypothesis yields
\begin{align*}
\expected\bigl(\tau_{k+1}(v)\bigr) & = \expected\bigl(\tau_{k}(v)\bigr) 
+ \frac{1}{k \cdot (n-k)}  = \frac 1n \cdot \left(H_{k-1} + H_{n-1} - H_{n-k} + \frac 1k + \frac{1}{n-k} \right)  \\
&= \frac 1n \cdot \bigl(H_{k} + H_{n-1} - H_{n-(k+1)}\bigr) .
\end{align*}
\end{proof}

From this result, we can easily deduce two known results:
averaging over $k$ yields that the expected distance of an edge
is $\frac{H_{n-1}}{n-1} \approx \ln n/n$~\cite{DavisPrieditis:ExpectedShortestPath:1993,Janson}.
The longest distance from $v$ to any other node is $\tau_n(v)$, which is $2 H_{n-1}/n \approx 2 \ln n/n$ in expectation~\cite{Janson}.
For completeness, let us mention that the diameter $\dmax$ is approximately $3 \ln n/n$~\cite{Janson}.
However, this does not follow immediately from Lemma~\ref{thm:expectations}.

\subsection[Distribution of tau k v]{\boldmath Distribution of $\tau_{k}(v)$}
\label{sec:Deltakv}

Let us now have a closer look at cumulative distribution function of $\tau_k(v)$ for fixed $v \in V$ and $k \in [n]$. 
To do this, the following lemma is very useful.

\begin{lemma}
\label{lem:sumequalsmax}
Let $X \sim \sum_{i=1}^n \Exp(ci)$.
Then $\probab(X \leq \alpha) = (1-e^{-c\alpha})^{n}$.
\end{lemma}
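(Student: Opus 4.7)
The plan is to prove the distributional identity $\sum_{i=1}^n \Exp(ci) \stackrel{d}{=} \max_{i \in [n]} Y_i$, where $Y_1, \ldots, Y_n$ are i.i.d.\ $\Exp(c)$. Once this is established, the claim follows immediately from
\[
\probab\bigl(\max_i Y_i \leq \alpha\bigr) = \prod_{i=1}^n \probab(Y_i \leq \alpha) = (1 - e^{-c\alpha})^n,
\]
using independence and the fact that $\probab(Y_i \leq \alpha) = 1 - e^{-c\alpha}$ for $Y_i \sim \Exp(c)$. This is exactly the reason the lemma is named ``sum equals max''.

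To establish the distributional identity, I would analyze the order statistics $Y_{(1)} \leq Y_{(2)} \leq \cdots \leq Y_{(n)}$ and show that $Y_{(n)}$ has the required sum-of-exponentials distribution. The key tool is the memoryless property. First, $Y_{(1)} = \min_i Y_i$ is the minimum of $n$ independent $\Exp(c)$ random variables, which is distributed as $\Exp(nc)$. Then, conditioning on the identity and value of the minimum, the remaining $n-1$ variables, shifted by $Y_{(1)}$, are i.i.d.\ $\Exp(c)$ by memorylessness. Iterating, the gap $Y_{(k+1)} - Y_{(k)}$ is distributed as $\Exp((n-k)c)$ and is independent of the previous gaps.

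Consequently, writing the maximum as a telescoping sum of gaps,
\[
Y_{(n)} = \sum_{k=1}^n \bigl(Y_{(k)} - Y_{(k-1)}\bigr) \quad \text{(with } Y_{(0)} := 0\text{)},
\]
we see that $Y_{(n)}$ is distributed as $\sum_{k=1}^n \Exp\bigl((n-k+1)c\bigr) = \sum_{i=1}^n \Exp(ic)$, which matches the distribution of $X$. Combining with the first display yields the claim.

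The main obstacle is the justification of the independence of the gaps, which relies crucially on the memoryless property of the exponential distribution; this is the standard fact underlying the birth process description already used in Section~\ref{sec:structure}, so one could alternatively cite this directly. A completely computational alternative would be to verify the formula by induction on $n$, convolving the density of $\Exp(nc)$ with the inductive claim for $n-1$; this avoids any probabilistic interpretation but is less illuminating and the binomial-type cancellations make the integration slightly tedious.
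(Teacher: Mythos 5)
Your proposal is correct and follows essentially the same route as the paper: the paper also reduces the claim to the identity $X \stackrel{d}{=} \max_i Y_i$ with $Y_i \sim \Exp(c)$ i.i.d.\ and then computes $\probab(\max_i Y_i \le \alpha) = (1-e^{-c\alpha})^n$, merely citing the sum-equals-max fact rather than rederiving it. Your spacings argument via the memoryless property is a correct justification of that standard fact, so your write-up is simply a more detailed version of the paper's proof.
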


\begin{proof}
The random variable $X$ has the same distribution as $\max_{i=1}^n Y_i$, where $Y_i \sim \Exp(c)$.
We have $X \leq \alpha$ if and only if $Y_i \leq \alpha$ for all $i \in \{1, \ldots, n\}$.
\end{proof}

In the following, let $F_k$ denote the cumulative distribution function of $\tau_k(v)$ for some fixed vertex $v \in V$,
i.e., $F_k(x) = \probab(\tau_k(v) \leq x)$.

\begin{lemma}
\label{lem:cdf}
For every $\Delta \geq 0$, $v \in V$, and $k \in [n]$, we have
\[
  \bigl(1-\exp(-(n-k)\Delta)\bigr)^{k-1} \leq F_k(\Delta) \leq
   \bigl(1-\exp(-n\Delta)\bigr)^{k-1}.
\]
\end{lemma}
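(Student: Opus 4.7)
The plan is to exploit the representation $\tau_k(v) \sim \sum_{i=1}^{k-1}\Exp\bigl(i(n-i)\bigr)$ from Lemma~\ref{thm:expectations} together with Lemma~\ref{lem:sumequalsmax}, by sandwiching the rate parameters $i(n-i)$ between two expressions that are linear in $i$, so that Lemma~\ref{lem:sumequalsmax} becomes directly applicable.

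First, I would observe that for $i \in \{1, \ldots, k-1\}$ the inequality $n-k \leq n-i \leq n$ holds, and therefore
\[
  i(n-k) \;\leq\; i(n-i) \;\leq\; in.
\]
Since increasing the rate of an exponential random variable makes it stochastically smaller, this means that each summand $\Exp(i(n-i))$ of $\tau_k(v)$ is stochastically dominated by $\Exp(i(n-k))$ and stochastically dominates $\Exp(in)$. Summing these comparisons (which preserves stochastic order for independent random variables) yields
\[
  \sum_{i=1}^{k-1}\Exp(in) \;\preceq\; \tau_k(v) \;\preceq\; \sum_{i=1}^{k-1}\Exp\bigl(i(n-k)\bigr),
\]
where $\preceq$ denotes stochastic dominance. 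Passing to cumulative distribution functions reverses the inequalities, giving
\[
  \probab\Bigl(\sum_{i=1}^{k-1}\Exp\bigl(i(n-k)\bigr) \leq \Delta\Bigr) \;\leq\; F_k(\Delta) \;\leq\; \probab\Bigl(\sum_{i=1}^{k-1}\Exp(in) \leq \Delta\Bigr).
\]

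At this point Lemma~\ref{lem:sumequalsmax} applies to both sides: the lower-bound sum has rates $ci$ with $c = n-k$ and $k-1$ summands, yielding $(1-\exp(-(n-k)\Delta))^{k-1}$, while the upper-bound sum has $c = n$ and $k-1$ summands, yielding $(1-\exp(-n\Delta))^{k-1}$. Combining these gives exactly the claimed two-sided bound. I would also briefly note that the degenerate cases behave correctly: for $k=1$ both bounds equal $1$, matching $\tau_1(v) = 0$, and for $k=n$ the lower bound collapses to $0$, which is consistent since the maximum distance can certainly exceed any fixed $\Delta$.

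The only subtle step is the stochastic-dominance argument for sums of independent exponentials with varying rates; the standard way to make this rigorous is to couple the variables via their inverse-CDF representations, so that pointwise monotonicity of each summand transfers to the sum. Everything else is a direct substitution into Lemma~\ref{lem:sumequalsmax}, so I do not anticipate any real obstacle.
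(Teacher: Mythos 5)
Your proposal is correct and follows essentially the same route as the paper: both sandwich the rates $i(n-i)$ between $i(n-k)$ and $in$, invoke stochastic dominance, and then apply Lemma~\ref{lem:sumequalsmax} with $c=n-k$ and $c=n$. The paper's proof is simply a terser statement of exactly this argument.
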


\begin{proof}
Lemma~\ref{thm:expectations} states that $\tau_{k}(v) \sim \sum_{i=1}^{k-1}\Exp(i(n-i))$.
We approximate the parameters
by $ci$ for $c \in \{n-k, n\}$. The distribution with $c = n$ is stochastically dominated
by the true distribution, which is in turn dominated by the distribution obtained for $c=n-k$.
We apply Lemma~\ref{lem:sumequalsmax} with $c = n$ and $c = n-k$.
\end{proof}

\begin{lemma}
\label{lem:cdf2}
Fix $\Delta \geq 0$ and a vertex $v \in V$.
Then
\[
\bigl(1-\exp(-(n-k)\Delta)\bigr)^{k-1} \leq \probab\bigl(|B_{\Delta}(v)| \geq k\bigr)
\leq \bigl(1-\exp(-n\Delta)\bigr)^{k-1}.
\]
\end{lemma}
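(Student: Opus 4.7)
The plan is to observe that this lemma is essentially a restatement of Lemma~\ref{lem:cdf} via the definitional equivalence between the event $\{|B_\Delta(v)|\geq k\}$ and the event $\{\tau_k(v)\leq \Delta\}$. Once that equivalence is established, the bounds follow immediately from the already-proved cumulative distribution bounds on $\tau_k(v)$.

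First I would unpack the definitions. By definition, $\tau_k(v) = \min\{\Delta' \mid |B_{\Delta'}(v)|\geq k\}$, and the ball $B_{\Delta'}(v)$ is nondecreasing in $\Delta'$ (a larger radius can only include more vertices, not fewer). Hence for any $\Delta\geq 0$ we have $|B_\Delta(v)|\geq k$ if and only if $\Delta \geq \tau_k(v)$, i.e., $\tau_k(v)\leq \Delta$. Taking probabilities yields
\[
\probab\bigl(|B_\Delta(v)|\geq k\bigr) \;=\; \probab\bigl(\tau_k(v)\leq \Delta\bigr) \;=\; F_k(\Delta).
\]

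Next I would simply invoke Lemma~\ref{lem:cdf} with the same $\Delta$, $v$, and $k$, which gives
\[
\bigl(1-\exp(-(n-k)\Delta)\bigr)^{k-1} \;\leq\; F_k(\Delta) \;\leq\; \bigl(1-\exp(-n\Delta)\bigr)^{k-1},
\]
and substituting $F_k(\Delta) = \probab(|B_\Delta(v)|\geq k)$ yields exactly the claimed two-sided bound.

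There is essentially no obstacle here; the only subtlety worth pointing out is that the minimum in the definition of $\tau_k(v)$ is attained (almost surely the distances are distinct, but even without this, $|B_\Delta(v)|$ is a right-continuous step function of $\Delta$), so the infimum is truly a minimum and the iff statement above is valid. After that, the proof is one line of substitution.
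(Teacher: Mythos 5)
Your proof is correct and follows exactly the paper's argument: the equivalence $|B_\Delta(v)| \geq k \iff \tau_k(v) \leq \Delta$ combined with Lemma~\ref{lem:cdf}. The extra remark about the minimum being attained is a fine (if unnecessary) precaution; otherwise nothing to add.
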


\begin{proof}
We have $|B_{\Delta}(v)| \geq k$ if and only if $\tau_k(v) \leq \Delta$.
The lemma follows from Lemma~\ref{lem:cdf}.
\end{proof}

We can improve Lemma~\ref{lem:cdf} slightly in order to obtain even closer upper and lower bounds.
For $n, k \ge 2$, combining Lemmas~\ref{lem:cdf} and~\ref{lem:cdfImproved} yields tight upper and lower bounds
if we disregard the constants in the exponent, namely
$F_k(\Delta) = \big(1 - \exp(-\Theta(n \Delta))\big)^{\Theta(k)}$.

\begin{lemma} \label{lem:cdfImproved}
For all $v \in V$, $k \in [n]$, and $\Delta \geq 0$, we have
\[
 F_k(\Delta) \ge \bigl(1 - \exp(-(n-1) \Delta / 4)\bigr)^{n-1}
\]
and
\[
F_k(\Delta) \ge \bigl(1 - \exp(-(n-1) \Delta / 4)\bigr)^{\frac{4}{3}(k-1)}.
\]
\end{lemma}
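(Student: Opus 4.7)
The plan is to prove both inequalities via stochastic domination combined with Lemma~\ref{lem:sumequalsmax}, which identifies $(1-\exp(-c\Delta))^{m}$ as the CDF of $\sum_{i=1}^{m}\Exp(ci)$. I will use the representation $\tau_k(v)\sim\sum_{i=1}^{k-1}\Exp(i(n-i))$ from Lemma~\ref{thm:expectations} throughout, together with the pointwise bound $\tau_k(v)\le\tau_n(v)$, which yields $F_k \ge F_n$.

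For the first inequality I would reduce to showing $F_n(\Delta) \ge (1-\exp(-(n-1)\Delta/4))^{n-1}$. A naive term-by-term coupling of $\Exp(i(n-i))$ against $\Exp(i(n-1)/4)$ fails for indices $i > (3n+1)/4$, where $i(n-i) < i(n-1)/4$. The workaround is to exploit that the distribution of a sum of independent exponentials depends only on the multiset of rate parameters, and to rearrange: I would sort the multiset $\{i(n-i) : i \in [n-1]\}$ in increasing order as $p_1 \le \dots \le p_{n-1}$. The symmetry $i(n-i)=(n-i)i$ yields the closed form $p_j = \lceil j/2 \rceil (n-\lceil j/2 \rceil)$, and using $\lceil j/2 \rceil \ge j/2$ together with $n-\lceil j/2 \rceil \ge (n-1)/2$ (which holds because $\lceil j/2 \rceil \le (n+1)/2$ for $j \le n-1$) gives $p_j \ge j(n-1)/4$. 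Coupling $\sum_j\Exp(p_j)$ with $\sum_j\Exp(j(n-1)/4)$ by a shared-uniform inverse-CDF transform then establishes $\tau_n(v)\le_{\mathrm{st}}\sum_{j=1}^{n-1}\Exp(j(n-1)/4)$, whose CDF at $\Delta$ is $(1-\exp(-(n-1)\Delta/4))^{n-1}$ by Lemma~\ref{lem:sumequalsmax}.

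For the second inequality I would split into two cases around $k-1 = (3n+1)/4$. If $k-1 \le (3n+1)/4$, then every $i \le k-1$ satisfies $n-i \ge (n-1)/4$, hence $i(n-i) \ge i(n-1)/4$, and term-by-term coupling yields $\tau_k(v) \le_{\mathrm{st}} \sum_{i=1}^{k-1}\Exp(i(n-1)/4)$ with CDF $(1-\exp(-(n-1)\Delta/4))^{k-1}$; because $k-1 \le \tfrac{4}{3}(k-1)$ and the base is in $[0,1]$, this dominates $(1-\exp(-(n-1)\Delta/4))^{4(k-1)/3}$. Otherwise $\tfrac{4}{3}(k-1) > n-1$, so $(1-\exp(-(n-1)\Delta/4))^{4(k-1)/3} \le (1-\exp(-(n-1)\Delta/4))^{n-1}$, and the desired bound follows from the first inequality together with $F_k \ge F_n$.

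The main obstacle will be identifying the closed form $p_j = \lceil j/2 \rceil (n-\lceil j/2 \rceil)$ for the sorted rates and verifying the majorization $p_j \ge j(n-1)/4$; once this is in hand, the rest is a routine coupling argument and elementary algebra.
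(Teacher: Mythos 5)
Your proof is correct, and for the first inequality it takes a genuinely different route from the paper. The paper also reduces to $k=n$ via monotonicity, but then exploits the symmetry $\lambda_i=\lambda_{n-i}$ of the rates by splitting the sum at $m=\lceil n/2\rceil$: it uses the stochastic bound $\tau_n(v)\le \tau_m(v)+\tau_m(v)$ (two independent copies), estimates $F_n(\Delta)\ge \probab\bigl(\tau_m(v)\le\Delta/2\bigr)^2$, and then invokes Lemma~\ref{lem:cdf}; the constant $4$ in the exponent comes from halving $\Delta$ together with $n-m\ge(n-1)/2$. You instead exploit the same symmetry by reordering the rate multiset, and your bookkeeping is right: the sorted rates are indeed $p_j=\lceil j/2\rceil\,(n-\lceil j/2\rceil)$, and since $\lceil j/2\rceil\ge j/2$ and $\lceil j/2\rceil\le(n+1)/2$ for $j\le n-1$, the majorization $p_j\ge j(n-1)/4$ holds; reordering does not change the law of the sum, termwise coupling preserves stochastic dominance under independent sums, and Lemma~\ref{lem:sumequalsmax} then gives the bound directly at $\Delta$. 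What each approach buys: yours avoids the halving-of-$\Delta$/squaring step and derives the constant purely from a rate comparison, at the cost of identifying the sorted order explicitly; the paper's avoids any such bookkeeping and reuses Lemma~\ref{lem:cdf} as a black box, at the cost of the slightly less direct $\probab(A+A'\le\Delta)\ge\probab(A\le\Delta/2)^2$ device. For the second inequality your argument is essentially the paper's: a case split near $k-1\approx\tfrac34 n$, where for small $k$ you rederive (by termwise domination with $n-i\ge(n-1)/4$) what the paper gets from Lemma~\ref{lem:cdf} via $n-k\ge(n-1)/4$, and for large $k$ you fall back on the first inequality using $\tfrac43(k-1)>n-1$; both cases check out, as does their coverage of all $k$.
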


\begin{proof}
As $\tau_{k}(v)$ is monotonically increasing in $k$, we have
$F_{k}(\Delta) \ge F_{k+1}(\Delta)$
    for all $k$. Thus, we have to prove the claim only for $k = n$.
    In this case, $\tau_{n}(v) \sim \sum_{i=1}^{n-1} \Exp(\lambda_{i})$,
    with $\lambda_{i} = i(n-i) = \lambda_{n-i}$.
    Setting $m = \lceil n/2 \rceil$ and exploiting the symmetry around $m$ yields
    \[
        \tau_{n}(v) \le \sum_{i=1}^{m} \Exp(\lambda_{i}) + \sum_{i=1}^{m} \Exp(\lambda_{i}) 
        = \tau_{m}(v) + \tau_{m}(v).
    \]
    Here, ``$\le$'' means stochastic dominance, ``$=$'' means equal distribution, 
    and ``$+$'' means adding up two independent random variables.
    Hence,
    \[
        F_{n}(\Delta) = \probab\bigl(\tau_{n}(v) \le \Delta\bigr)
       \ge \probab\bigl(\tau_{m}(v) + \tau_{m}(v) \le \Delta\bigr)
        \ge \probab\bigl(\tau_{m}(v) \le \Delta/2\bigr)^{2}.
    \]
    By Lemma~\ref{lem:cdf}, and using $m \le (n+1)/2$, this is bounded by
    \[
        F_{n}(\Delta) \ge (1-\exp(-(n-m)\Delta/2))^{2(m-1)}  
        \ge (1-\exp(-(n-1)\Delta/4))^{n-1}.
    \]
    
 For the second inequality, we use the first inequality of Lemma~\ref{lem:cdfImproved} for $k-1 \ge \frac{3}{4}(n-1)$ and
Lemma~\ref{lem:cdf} for $k-1 < \frac{3}{4}(n-1)$ as then $n-k \ge (n-1)/4$.
\end{proof}

\subsection[Tail Bounds for B Delta v and Delta max]{\boldmath Tail Bounds for $|B_{\Delta}(v)|$ and $\dmax$}

Our first tail bound for $|B_{\Delta}(v)|$, which is the number of vertices within distance $\Delta$
of a given vertex $v$, follows directly from Lemma~\ref{lem:cdf}.
From this lemma we derive the following corollary, which is a crucial ingredient for the existence of good
clusterings and, thus, for the analysis of heuristics in the remainder of this paper.

\begin{corollary}
\label{cor:tech}
Let $n \ge 5$ and fix $\Delta \geq 0$ and a vertex $v \in V$.
Then we have
\[
\probab\left(|B_{\Delta}(v)| < \min\left\{\exp\left(\Delta n/5\right), \frac{n+1}2 \right\}\right)
  \le \exp\left(-\Delta n/5\right).
\]
\end{corollary}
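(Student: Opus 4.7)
The plan is to apply Lemma~\ref{lem:cdf2} and turn it into a tail bound via Bernoulli's inequality. Since $|B_{\Delta}(v)|$ is integer-valued, the event $\{|B_{\Delta}(v)|<m\}$ with $m:=\min\{\exp(\Delta n/5),(n+1)/2\}$ coincides with $\{|B_{\Delta}(v)|<k\}$ for $k:=\lceil m\rceil$; the degenerate case $\Delta=0$ makes $k=1$ and the event empty.

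From $k=\lceil m\rceil$ and the two sides of the minimum defining $m$, I extract two bounds: $k-1\le\exp(\Delta n/5)$ (directly from $k-1<m\le\exp(\Delta n/5)$) and $n-k+1\ge n/2$ (from $k-1<(n+1)/2$ together with integrality of $k-1$: odd $n$ gives $k-1\le(n-1)/2$, even $n$ gives $k-1\le n/2$, and both yield $n-k+1\ge n/2$).

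Applying Lemma~\ref{lem:cdf2} -- in fact its sharper variant with parameter $c=n-k+1$, which follows from the same stochastic-dominance argument as Lemma~\ref{lem:cdf} because $i(n-i)\ge i(n-k+1)$ for all $i\le k-1$ -- together with Bernoulli's inequality $(1-y)^{k-1}\ge 1-(k-1)y$ gives
\[
  \probab(|B_{\Delta}(v)|<k)\le (k-1)\exp\bigl(-(n-k+1)\Delta\bigr)\le\exp\bigl(\Delta n/5-\Delta n/2\bigr)=\exp(-3\Delta n/10),
\]
which is at most $\exp(-\Delta n/5)$ since $3/10\ge 1/5$.

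The main obstacle I anticipate is justifying the sharper $(n-k+1)$-version of Lemma~\ref{lem:cdf2} and the attendant parity bookkeeping: for even $n$ the bound $n-k+1\ge n/2$ is tight, and with the stated $(n-k)$-version one would only obtain $n-k\ge n/2-1$, losing exactly the constant-order slack needed for the final comparison. Once that minor sharpening is in hand, the constants $1/5$ and $(n+1)/2$ in the statement are calibrated precisely so that $2\Delta n/10\le 3\Delta n/10$ closes the argument uniformly in $n$ (the hypothesis $n\ge 5$ is used only in subsequent applications of the corollary).
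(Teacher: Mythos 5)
Your proof is correct and takes essentially the same route as the paper: apply the lower bound of Lemma~\ref{lem:cdf2} to the event $\{|B_{\Delta}(v)|<k\}$ and turn it into a tail bound via Bernoulli's inequality $(1-x)^y\ge 1-xy$. The differences are only bookkeeping: you handle integrality via $k=\lceil m\rceil$ and sharpen the dominance constant to $n-k+1$, whereas the paper works with the larger threshold $\exp(\Delta(n-1)/4)$ and the bound $n-k\ge(n-1)/2$, converting to the constant $1/5$ only at the end using $(n-1)/4\ge n/5$ for $n\ge5$.
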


\begin{proof}
Lemma~\ref{lem:cdf2} yields
\begin{align*}
\probab\left(|B_{\Delta}(v)| < \min\left\{\exp\left(\Delta \frac{n-1}4\right), \frac{n+1}2 \right\}\right)
 & \leq 1- \left(1-\exp\left(-\frac{n-1}2 \Delta \right)\right)^{\exp(\Delta (n-1)/4)}  \\
 & \leq \exp\left(- \Delta \frac{n-1}4 \right),
\end{align*}
where the last inequality follows from $(1-x)^y \ge 1 - xy$ for $y \ge 1$, $x \ge 0$.
Using $(n-1)/4 \ge n/5$ for $n \ge 5$ completes the proof.
\end{proof}

Corollary~\ref{cor:tech} is almost tight according to the following result.

\begin{corollary}
\label{cor:techlower}
Fix $\Delta \ge 0$, a vertex $v \in V$, and any $c > 1$. Then
\[
\probab\bigl(|B_{\Delta}(v)| \geq \exp(c \Delta n)\bigr) < 
\exp\bigl(- (c-1) \Delta n\bigr).
\]
\end{corollary}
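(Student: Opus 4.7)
The plan is to derive this corollary directly from the upper bound in Lemma~\ref{lem:cdf2}, using standard inequalities $1 - x \le e^{-x}$ and $e^{x} > 1 + x$ (for $x > 0$). The statement is essentially the converse of Corollary~\ref{cor:tech}: instead of saying balls are \emph{large} with high probability, it says they are not \emph{too} large.

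First I would set $k = \lceil \exp(c\Delta n)\rceil$ so that $|B_\Delta(v)| \ge \exp(c\Delta n)$ is equivalent to $|B_\Delta(v)| \ge k$. Then Lemma~\ref{lem:cdf2} gives
\[
\probab\bigl(|B_\Delta(v)| \ge k\bigr) \le \bigl(1-\exp(-n\Delta)\bigr)^{k-1}.
\]
Applying $1-x \le e^{-x}$ with $x = \exp(-n\Delta)$ yields the bound $\exp\bigl(-(k-1)\exp(-n\Delta)\bigr)$.

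Next I would simplify the exponent. Since $k \ge \exp(c\Delta n)$, we have
\[
(k-1)\exp(-n\Delta) \ge \bigl(\exp(c\Delta n)-1\bigr)\exp(-n\Delta) = \exp\bigl((c-1)\Delta n\bigr) - \exp(-n\Delta).
\]
It now suffices to show that this quantity strictly exceeds $(c-1)\Delta n$. For $\Delta > 0$, the inequality $e^{x} > 1 + x$ (applied to $x = (c-1)\Delta n > 0$) gives $\exp((c-1)\Delta n) > 1 + (c-1)\Delta n$, while $\exp(-n\Delta) < 1$, so subtracting yields the desired strict lower bound $(c-1)\Delta n$. Combining with the monotonicity of $t \mapsto e^{-t}$ then produces the claimed bound $\exp(-(c-1)\Delta n)$.

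The proof has no real obstacle: the only step that requires mild care is that the strict inequality in the conclusion comes from the strict inequality $e^{x} > 1 + x$ for $x > 0$, together with $\exp(-n\Delta) < 1$ for $\Delta > 0$. (The boundary case $\Delta = 0$ is vacuous since $\exp(c \Delta n) = 1$ and the bound $\exp(-(c-1)\Delta n) = 1$ is trivially \emph{not} strict; the statement is therefore understood for $\Delta > 0$, and this is the regime in which it is used in the sequel.)
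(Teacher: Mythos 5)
Your proof is correct and takes essentially the same route as the paper's: apply the upper bound of Lemma~\ref{lem:cdf2}, then $1-x\le e^{-x}$ (equivalently $1+x\le e^x$) and $e^{x}\ge 1+x$ together with $\exp(-n\Delta)\le 1$ to extract the exponent $(c-1)\Delta n$. Your handling of integrality via $k=\lceil \exp(c\Delta n)\rceil$ and your remark about strictness at $\Delta=0$ are slightly more careful than the paper's wording, but they do not change the argument.
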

\begin{proof}
Lemma~\ref{lem:cdf2} with $k = c\Delta n$ yields
\[
\probab\bigl(|B_{\Delta}(v)| \geq \exp(c \Delta n)\bigr)
 \leq \bigl(1-\exp(-n\Delta)\bigr)^{\exp(c \Delta n)-1}.
\]
Using $1+x \le e^x$, we get
\[
\probab\bigl(|B_{\Delta}(v)| \geq \exp(c \Delta n)\bigr)
 \leq \exp\left(\exp(-n\Delta)- \exp\bigl((c-1) \cdot \Delta n\bigr)\right).
\]
Now, we bound $\exp(-n\Delta) \le 1$ and $\exp\bigl((c-1) \cdot \Delta n\bigr) \ge 1 + (c-1) \cdot \Delta n$, which yields the
claimed inequality.
\end{proof}

Janson~\cite{Janson} derived the following tail bound for the diameter $\dmax$. 
A qualitatively similar bound can be proved
using Lemma~\ref{lem:cdf2} and can also be derived from Hassin and Zemel's analysis~\cite{HassinZemel:ShortestPath:1985}.
However, Janson's bound is stronger with
respect to the constants in the exponent.

\begin{lemma}[\mbox{Janson~\cite[p. 352]{Janson}}]
\label{lem:dmax}
For any fixed $c > 3$, we have
$\probab(\dmax > c \ln (n)/n) \leq O(n^{3-c} \log^2 n)$.
\end{lemma}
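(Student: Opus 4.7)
The plan is to reduce to a tail bound on $\tau_n(v)$ and then apply a Chernoff / moment-generating-function argument. Since $\dmax = \max_{u,v \in V} d(u,v) = \max_{v \in V} \tau_n(v)$, a union bound over the $n$ starting vertices shows that it suffices to prove $\probab(\tau_n(v) > c\ln n / n) \le O(n^{2-c} \log^2 n)$ for one fixed $v$. One might try to use the CDF estimates of Lemmas~\ref{lem:cdf} and~\ref{lem:cdfImproved} directly, but these lose a factor of $4$ in the exponent and only yield $n^{2-c/4}$ after the final union bound, which is far too weak to handle values of $c$ close to $3$.

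Instead, I would use the representation $\tau_n(v) \sim \sum_{i=1}^{n-1} \Exp(\lambda_i)$ with $\lambda_i = i(n-i)$ from Lemma~\ref{thm:expectations} and apply Markov to $\exp(s\tau_n(v))$. Independence of the summands gives
\[
\probab(\tau_n(v) > \Delta) \le e^{-s\Delta} \prod_{i=1}^{n-1} \frac{\lambda_i}{\lambda_i - s}
\]
for any $s < n-1$. The natural choice is $s = n - 1 - a$ with $a = 1/\Delta$, so $e^{-s\Delta} = e \cdot e^{-(n-1)\Delta}$. For $\Delta = c\ln n / n$ this is $O(n^{-c})$. The two endpoint factors ($i = 1$ and $i = n-1$) each equal $(n-1)/a = (n-1)\Delta = O(\log n)$, so jointly they contribute $O(\log^2 n)$.

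The heart of the argument is the middle product over $2 \le i \le n-2$. The key identity is $i(n-i) - (n-1) = (i-1)(n-i-1)$, which implies $\lambda_i - s \ge (i-1)(n-i-1)$ and hence
\[
\frac{\lambda_i}{\lambda_i - s} \le \frac{i(n-i)}{(i-1)(n-i-1)} = \frac{i}{i-1} \cdot \frac{n-i}{n-i-1}.
\]
Both of these factors telescope when multiplied over $i = 2, \ldots, n-2$, each product equal to $n-2$. Thus the middle product is at most $(n-2)^2 = O(n^2)$. Combining, $\probab(\tau_n(v) > \Delta) \le O(n^{-c}) \cdot O(\log^2 n) \cdot O(n^2) = O(n^{2-c} \log^2 n)$, and the union bound yields $O(n^{3-c} \log^2 n)$ as claimed.

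The main obstacle is obtaining a sufficiently sharp estimate of the moment-generating-function product. A naive approach -- bounding each middle factor by a constant -- blows up to $2^{\Omega(n)}$; even the natural bound $\log(\lambda_i/(\lambda_i - s)) \le (n-1)/((i-1)(n-i-1))$ sums to $\Theta(\log n)$ and thus only gives $n^{O(1)}$ with a loose constant in the exponent. What saves the argument is the exact algebraic factorization $i(n-i) - (n-1) = (i-1)(n-i-1)$, which lets the two pieces telescope and yields precisely the $n^2$ factor needed to hit the $n^{3-c}$ exponent after the outer union bound.
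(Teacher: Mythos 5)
Your proof is correct, but note that the paper itself offers no proof of this statement: Lemma~\ref{lem:dmax} is quoted from Janson, and the authors only remark that a qualitatively similar bound (with worse constants in the exponent) could be derived from Lemma~\ref{lem:cdf2} or from Hassin and Zemel's analysis. So there is no internal argument to compare against; what you have produced is a self-contained derivation that actually recovers Janson's exponent. Your steps check out: $\dmax = \max_v \tau_n(v)$ plus a union bound over the $n$ choices of $v$; the representation $\tau_n(v) \sim \sum_{i=1}^{n-1}\Exp(i(n-i))$ from Lemma~\ref{thm:expectations}; the Chernoff bound with $s = n-1-1/\Delta$, which is legitimate since $\min_i \lambda_i = n-1$ and $s>0$ for $\Delta = c\ln(n)/n$ and $n$ large (small $n$ being absorbed into the $O$-constant); the identity $i(n-i)-(n-1) = (i-1)(n-i-1)$ guaranteeing $\lambda_i - s > 0$ for $2 \le i \le n-2$ and making the middle product telescope to $(n-2)^2$; and the two endpoint factors $(n-1)\Delta = O(\log n)$ each, giving $O(n^{2-c}\log^2 n)$ per vertex and $O(n^{3-c}\log^2 n)$ overall. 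This is exactly the improvement over the paper's own machinery that you identify at the outset (the factor-$4$ loss in Lemma~\ref{lem:cdfImproved} would only give $n^{2-c/4}$). One small quibble with your closing remark: the bound $\log\bigl(\lambda_i/(\lambda_i-s)\bigr) \le (n-1)/\bigl((i-1)(n-i-1)\bigr)$ is not actually too lossy, since $\sum_{i=2}^{n-2}\frac{n-1}{(i-1)(n-i-1)} = \frac{n-1}{n-2}\sum_{i=2}^{n-2}\bigl(\frac{1}{i-1}+\frac{1}{n-i-1}\bigr) = 2\ln n + O(1)$, so exponentiating also yields $O(n^2)$; the telescoping is simply the cleaner, exact way to see the same $n^2$ factor.
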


\subsection{Balls and Clusters}
\label{ssec:clusters}

In this section,
we show our main structural contribution, which is a global property of random shortest path metrics.
We show that such instances can be divided into a small number of clusters of any given diameter.

From now on, let $s_\Delta = \min\{\exp(\Delta n/5), (n+1)/2\}$, as in Corollary~\ref{cor:tech}.
If $|B_{\Delta}(v)|$, the number of vertices within distance $\Delta$ of $v$, is
at least $s_\Delta$, then we call the vertex $v$ a \emph{dense $\Delta$-center},
and we call the set $B_\Delta(v)$ of vertices within distance $\Delta$ of $v$ (including $v$ itself)
the \emph{$\Delta$-ball of $v$}. Otherwise, if $|B_{\Delta}(v)| < s_\Delta$, and $v$ is not part of any $\Delta$-ball, we call the vertex $v$
a \emph{sparse $\Delta$-center}.
Any two vertices in the same $\Delta$-ball have a distance of at most~$2 \Delta$ because of the triangle inequality. 

If $\Delta$ is clear from the context, then we also speak about centers and balls without parameter. We can bound,
by Corollary~\ref{cor:tech},
the expected number of sparse $\Delta$-centers to be
at most $O(n/s_\Delta)$.

We want to partition the graph into a small number of clusters, each of diameter at most~$6
\Delta$. For this purpose, we put each sparse $\Delta$-center in its own cluster (of size 1). Then the
diameter of each such cluster is $0$, which is trivially upper-bounded by $6 \Delta$, and the number of these clusters is expected to be
at most $O(n/s_\Delta)$.

We are left with the dense $\Delta$-centers, which we cluster using the following algorithm:
Consider an auxiliary graph whose vertices are all dense $\Delta$-centers.
We draw an edge between two dense $\Delta$-centers $u$ and $v$ if $B_\Delta(u) \cap B_\Delta(v) \neq \emptyset$.
Now consider any maximal independent set of this auxiliary graph (for instance, a greedy independent set), and let $t$
be the number of its vertices.
Then we form initial clusters $C_1', \ldots, C_t'$, each containing one of the $\Delta$-balls corresponding
to the vertices in the independent set. By the independence, all these $t$ $\Delta$-balls are disjoint, which implies $t \le n/s_\Delta$.
The ball of every remaining center $v$ has at least one vertex in one of the $C_i'$.
We add all remaining vertices of $B_\Delta(v)$ to such a $C_i'$ to form the final clusters $C_1,\ldots,C_t$.
By construction, the diameter of each $C_i$ is at most $6\Delta$:
Consider any two vertices $u, v \in C_i$. The distance of $u$ towards its closest neighbor
in the initial ball $C_i'$ is at most $2\Delta$. The same holds
for $v$. Finally, the diameter of the initial ball $C_i'$ is also at most $2\Delta$.

With this partitioning, we have obtained the following structure:
We have an expected number of $O(n/s_\Delta)$ clusters of size 1 and diameter 0, and a number of $O(n/s_{\Delta})$ clusters
of size at least $s_\Delta$ and diameter at most $6 \Delta$.
Thus, we have $O(n/s_\Delta) = O(1 + n/\exp(\Delta n/5))$ clusters in total.
We summarize these findings in the following lemma.
This lemma is the crucial ingredient for bounding the expected approximation ratios of the greedy, nearest-neighbor,
and insertion heuristics.

\begin{lemma}
    \label{lem:numberofclusters}
Consider a random shortest path metric and let $\Delta \ge 0$. If we partition the instance into clusters, each of diameter at most $6 \Delta$,
then the expected number of clusters needed is $O(1 + n/\exp(\Delta n/5))$.
\end{lemma}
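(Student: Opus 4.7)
The plan is to make the clustering construction sketched just before the statement rigorous, extracting a probabilistic bound on the number of singleton (sparse) clusters and a deterministic bound on the number of ``dense'' clusters produced by a maximal-independent-set step.

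First I would recall $s_\Delta = \min\{\exp(\Delta n/5), (n+1)/2\}$ and classify each vertex $v$ as a dense $\Delta$-center if $|B_\Delta(v)| \ge s_\Delta$ and sparse otherwise. Every sparse center is placed in its own singleton cluster, whose diameter $0$ trivially satisfies the $6\Delta$ bound. Corollary~\ref{cor:tech} gives $\probab(v \text{ is sparse}) \le \exp(-\Delta n/5)$, and linearity of expectation bounds the expected number of sparse centers by $n\exp(-\Delta n/5)$. A short case split on the two branches of the minimum in $s_\Delta$ shows that this quantity is $O(n/s_\Delta)$ in either regime.

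For the dense centers I would build the intersection graph described above, take any maximal independent set $v_1,\dots,v_t$, and use the balls $C_i' := B_\Delta(v_i)$ as initial clusters. These are pairwise disjoint and each has size at least $s_\Delta$, yielding the deterministic bound $t \le n/s_\Delta$. For each remaining dense center $u$, maximality forces $B_\Delta(u)\cap B_\Delta(v_i) \neq \emptyset$ for some $i$, so I would add all of $B_\Delta(u)$ to that $C_i'$ to form the final cluster $C_i$. To verify that $\operatorname{diam}(C_i) \le 6\Delta$, for any $x,y\in C_i$ I would chain three triangle-inequality hops of length at most $2\Delta$: from $x$ to a vertex of $C_i'$ via the dense center whose ball carried $x$ into $C_i$, across $C_i'$ using $v_i$ as a pivot, and back out to $y$ by the symmetric route.

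Combining the two parts gives expected total cluster count $O(n/s_\Delta)$, and unfolding the two branches of $s_\Delta$ rewrites this as $O(1 + n/\exp(\Delta n/5))$, as claimed. The only step that needs genuine attention is the diameter verification, because the routing passes through two intersection witnesses that are not named explicitly by the construction and the chain must be written out carefully; the remaining bookkeeping is a direct consequence of Corollary~\ref{cor:tech} together with the disjointness of the chosen $\Delta$-balls.
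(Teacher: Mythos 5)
Your proposal is correct and follows essentially the same route as the paper, which carries out exactly this construction (sparse centers as singletons bounded via Corollary~\ref{cor:tech}, a maximal independent set of disjoint $\Delta$-balls giving $t \le n/s_\Delta$, and the three-hop $2\Delta+2\Delta+2\Delta$ diameter argument) in the discussion preceding the lemma. The only cosmetic difference is that you make every non-dense vertex a singleton rather than only those not covered by some $\Delta$-ball, which still yields the same $O(n/s_\Delta) = O(1 + n/\exp(\Delta n/5))$ bound.
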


\section{Analysis of Heuristics}
\label{sec:heuristics}

\subsection{Greedy Heuristic for Minimum-Length Perfect Matching}

Finding min\-i\-mum-length perfect matchings in metric instances is the first problem that we consider.
This problem has been widely considered in the past and has applications in, e.g., optimizing the speed of
mechanical plotters~\cite{ReingoldTarjan,SupowitEA:HeuristicsMatching:1980}.
The worst-case running-time of $O(n^3)$ for finding an optimal matching is prohibitive
if the number $n$ of points is large.
Thus, simple heuristics are often used, with the greedy heuristic being probably the simplest one:
at every step, choose an edge of minimum length incident to the 
unmatched vertices and add it to the partial matching. Let $\greedy$ denote
the cost of the matching output by this greedy matching heuristic, and let $\mm$ denote the optimum value of the minimum-length perfect matching.
The worst-case approximation ratio for greedy matching on metric instances is $\Theta(n^{\log_2 (3/2)})$~\cite{ReingoldTarjan},
where $\log_2(3/2) \approx 0.58$.
In the case of Euclidean instances, the greedy algorithm has an approximation ratio of $O(1)$ with high probability
on random instances~\cite{ADS88}.
For independent random edge weights (without the triangle inequality), the expected weight of the matching
computed by the greedy algorithm is $\Theta(\log n)$~\cite{DyerEA:GreedyMatching:1993} whereas the optimal matching has a weight of $\Theta(1)$ with high probability,
which gives an $O(\log n)$ approximation ratio.

We show that greedy matching finds a matching of constant expected length on random shortest path metrics.

\begin{theorem}
    \label{thm:greedymatching}
     $\expected(\greedy)= O(1)$.
\end{theorem}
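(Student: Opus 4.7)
The plan is to combine the clustering result Lemma~\ref{lem:numberofclusters} with a layer-cake representation of the total matching cost. I would start by writing
\[
\greedy \;=\; \sum_{e} w(e) \;=\; \int_0^\infty N(\ell)\, d\ell \;=\; 6 \int_0^\infty M(\Delta)\, d\Delta,
\]
where $N(\ell)$ counts the matching edges of length greater than $\ell$ and $M(\Delta) := N(6\Delta)$. By Tonelli, $\expected(\greedy) = 6 \int_0^\infty \expected(M(\Delta))\,d\Delta$, so the task reduces to bounding this integrand.

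The heart of the argument is a deterministic inequality $M(\Delta) \le k(\Delta)/2$, where $k(\Delta)$ is the (random) number of clusters produced by Lemma~\ref{lem:numberofclusters} at diameter $6\Delta$. Two properties of greedy enter: (i) it picks edges in non-decreasing length order, because at each step the minimum distance over the shrinking set of available unmatched pairs can only go up; and (ii) at the first moment greedy adds an edge of length more than $6\Delta$, all remaining unmatched vertices must be pairwise at distance more than $6\Delta$ (otherwise greedy would pick a shorter edge first). Since each cluster has diameter at most $6\Delta$, this forces at most one unmatched vertex per cluster, hence at most $k(\Delta)$ unmatched vertices at that moment; all subsequent greedy edges are also longer than $6\Delta$ and each pairs up two leftovers, so the total number of long edges is at most $k(\Delta)/2$. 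Taking expectations and applying Lemma~\ref{lem:numberofclusters} gives $\expected(M(\Delta)) \le \expected(k(\Delta))/2 = O(1 + n\,\exp(-\Delta n/5))$. Independently, $M(\Delta) \le n/2$ and $M(\Delta) = 0$ when $\dmax \le 6\Delta$, so also $\expected(M(\Delta)) \le (n/2)\,\probab(\dmax > 6\Delta)$.

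To close the argument, I would split the integral at $\Delta^\star = 4\ln n / n$. On $[0, \Delta^\star]$ the cluster bound applies: the $O(1)$ piece integrates to $O(\log n / n) = o(1)$, while $\int_0^{\Delta^\star} n \exp(-\Delta n/5)\,d\Delta \le 5$, so this range contributes $O(1)$. On $[\Delta^\star, \infty)$ I use the second bound together with Lemma~\ref{lem:dmax}, which makes $\probab(\dmax > 6\Delta)$ decay super-polynomially in $n$; the factor $n/2$ is easily absorbed and this tail contributes $o(1)$. Combining the two pieces yields $\expected(\greedy) = O(1)$.

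The main obstacle -- and really the only step carrying actual content -- is the combinatorial inequality $M(\Delta) \le k(\Delta)/2$. It rests cleanly on greedy's monotone selection order together with the deterministic diameter guarantee of Lemma~\ref{lem:numberofclusters}. After that, everything else is routine calculus plus the sharp tail bound on $\dmax$; the only care needed is choosing the split threshold on the $\ln n / n$ scale so that the super-polynomial decay from Lemma~\ref{lem:dmax} outweighs the prefactor $n/2$ in the $\dmax$-based bound.
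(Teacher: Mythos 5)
Your proposal is correct and is essentially the paper's own argument: the deterministic step ``once greedy's edges exceed $6\Delta$, each diameter-$6\Delta$ cluster from Lemma~\ref{lem:numberofclusters} holds at most one unmatched vertex'' is exactly the paper's key observation, combined with the same tail bound from Lemma~\ref{lem:dmax} for long distances. The only difference is cosmetic: you integrate $\expected(M(\Delta))$ via a layer-cake representation, whereas the paper discretizes the same length scale into phases $\Delta_i = i/n$ and sums their contributions.
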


\begin{proof}
    Let $\Delta_i = \frac in$. We divide the run of $\greedy$ in phases as follows: we say that $\greedy$ is in
    phase $i$ if edges $\{u,v\}$ are inserted such that $d(u,v) \in (6\Delta_{i-1}, 6
    \Delta_i]$.
    Lemma~\ref{lem:dmax} allows to show that the expected sum of all edges longer than $\Delta_{\Omega(\log n)}$ is $o(1)$, so we can ignore them.

    $\greedy$ goes through phases $i$ with increasing $i$ (phases can be empty).
    We now estimate the contribution of phase $i$
    to the matching
    computed by $\greedy$.
    Using Lemma~\ref{lem:numberofclusters}, after phase $i-1$ we can find a clustering into clusters of diameter at most $6 \Delta_{i-1}$ using an expected number of $O(1 + n / e^{(i-1)/5})$ clusters. 
    Each such cluster can have at most one unmatched vertex.
    Thus, we have to add at most $O(1 + n /e^{(i-1)/5})$ edges in phase $i$. Each such edge connects vertices at a distance of at most $6 \Delta_i$.
    Hence, the contribution of phase $i$
    is $O(\frac in \cdot (1+ n/e^{(i-1)/5}))$ in expectation. Summing over all phases yields the desired bound:
    \[
		\expected\bigl(\greedy\bigr) =
        o(1) + \sum_{i=1}^{O(\log n)} O\left(\frac{i}{e^{(i-1)/5}} + \frac in \right)
        = O(1).
				\]
\end{proof}

Careful analysis allows us to bound the expected approximation ratio.

\begin{theorem}
    \label{thm:greedy-exp}
    The greedy algorithm for minimum-length perfect matching has constant approximation ratio on random shortest path metrics, i.e.,
    $\expected\left( \frac{\greedy}{\mm}\right) = O(1)$.
\end{theorem}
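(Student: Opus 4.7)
The plan is to split $\expected(\greedy/\mm)$ according to whether $\mm$ is atypically small. Fix a constant $c > 0$ to be chosen. Then
\[
\expected(\greedy/\mm) \le \frac{\expected(\greedy)}{c} + \expected\bigl((\greedy/\mm)\cdot\mathbb{1}[\mm < c]\bigr),
\]
and by Theorem~\ref{thm:greedymatching} the first summand is already $O(1)$. For the second summand I would invoke the worst-case metric approximation bound $\greedy/\mm = O(n^{\log_2(3/2)})$ of Reingold and Tarjan, cited at the start of this section, giving an upper bound of $O(n^{\log_2(3/2)})\cdot \probab[\mm < c]$. It therefore suffices to establish $\probab[\mm < c] = o(n^{-\log_2(3/2)})$; any polynomial decay strictly faster than $n^{-0.585}$ is enough.

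For the lower bound on $\mm$, observe that in any perfect matching the edge incident to vertex $v$ has length at least $\tau_2(v)$, the distance from $v$ to its nearest neighbour. Summing over all vertices gives $2\mm \ge \sum_v \tau_2(v)$. Moreover, every path leaving $v$ begins with an edge incident to $v$, so $\tau_2(v)$ equals the minimum weight among the $n-1$ edges at $v$ in the underlying random graph; in particular $\tau_2(v) \sim \Exp(n-1)$, consistent with Lemma~\ref{thm:expectations}.

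To show $\sum_v \tau_2(v) = \Omega(1)$ with high probability I would count the vertices with $\tau_2(v) \ge 1/n$. Let $N = |\{v : \tau_2(v) \ge 1/n\}|$, so that $\sum_v \tau_2(v) \ge N/n$ and $\expected[N] = n\,e^{-(n-1)/n} = \Omega(n)$. The indicators $\mathbb{1}[\tau_2(v)\ge 1/n]$ and $\mathbb{1}[\tau_2(v')\ge 1/n]$ depend jointly on only the $2n-3$ edges incident to $\{v,v'\}$, so a direct covariance computation yields $\mathrm{Var}(N) = O(n)$. Chebyshev's inequality then gives $\probab[N \le \expected[N]/2] = O(1/n)$, whence $\probab[\mm \le c] = O(1/n)$ for some absolute constant $c > 0$. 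Since $1 > \log_2(3/2)$, multiplying by the worst-case ratio gives $O(n^{\log_2(3/2)-1}) = o(1)$, and the decomposition above completes the proof.

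The main obstacle is the concentration of $\sum_v \tau_2(v)$: these variables are not independent, since each random edge influences $\tau_2$ at both of its endpoints. What makes the second-moment argument succeed is precisely that two distinct vertices share only one common incident edge, so pairwise covariances are small enough for Chebyshev to deliver a polynomial tail — more than enough to dominate the sublinear worst-case approximation ratio of greedy matching.
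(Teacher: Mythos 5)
Your proof is correct, and its overall skeleton is exactly the paper's: split $\expected(\greedy/\mm)$ according to whether $\mm < c$, bound the good event by $\expected(\greedy)/c = O(1)$ via Theorem~\ref{thm:greedymatching}, and bound the bad event by the deterministic Reingold--Tarjan worst-case ratio $n^{\log_2(3/2)}$ times $\probab(\mm < c)$. Where you genuinely diverge is in the tail bound for $\mm$. The paper uses Lemma~\ref{lem:opt-tail}: it lower-bounds $\mm$ by $S_{n/2}$, the sum of the $n/2$ lightest edge weights (by unrolling matching edges into their shortest paths), and a union bound over edge subsets gives $\probab(\mm < c) \le (2e^2c)^{n/2}$, i.e.\ an exponentially small tail for small constant $c$. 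You instead lower-bound $2\mm \ge \sum_v \tau_2(v)$, observe that $\tau_2(v)$ coincides with the minimum incident edge weight (so $\tau_2(v) \sim \Exp(n-1)$), and run a second-moment argument on $N = |\{v : \tau_2(v) \ge 1/n\}|$; since two indicators share only the single edge $\{v,v'\}$, each covariance is $O(1/n)$, so $\operatorname{Var}(N) = O(n)$ and Chebyshev gives $\probab(\mm < c) = O(1/n)$ for a suitable absolute constant $c$. That is weaker than the paper's exponential tail, but it suffices here precisely because the fallback ratio $n^{\log_2(3/2)} \approx n^{0.585}$ is sublinear, and you correctly flag this as the crux. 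The trade-off: your argument is elementary and self-contained (no union bound over $\binom{\binom n2}{m}$ subsets, no order-statistics estimate), while the paper's Lemma~\ref{lem:opt-tail} is a reusable tool — it simultaneously serves the nearest-neighbor and insertion heuristics and the $k$-median lower bound (Lemma~\ref{lem:medianconst}), and its exponentially small tail is robust against any polynomial (even superpolynomial) fallback bound, whereas your $O(1/n)$ decay is tied to the specific sublinear exponent of greedy matching.
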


We will use the following tail bound to estimate the approximation ratios of the
greedy heuristic for matching as well as the
nearest-neighbor and insertion heuristics for the TSP.

\begin{lemma}
    \label{lem:opt-tail}
    Let $\alpha \in [0,1]$.
    Let $S_m$ be the sum of the lightest $m$ edge weights, where $m \geq \alpha n$.
    Then, for all $c \in [0,1]$, we have
    \[
      \probab(S_m \leq c) \leq \left(\frac{e^2 c}{2\alpha^2}\right)^{\alpha n}.
    \]

    Furthermore, $\tsp \geq \mm \geq S_{n/2}$, where $\tsp$ and $\mm$
    denote the length of the shortest TSP tour and the minimum-weight
    perfect matching, respectively, in the corresponding shortest path metric.
\end{lemma}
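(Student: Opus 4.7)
The plan is to prove the two claims of the lemma separately. Throughout, let $N = \binom{n}{2}$ denote the total number of edges and let $W_1, \ldots, W_N$ be the i.i.d.\ $\Exp(1)$ edge weights.

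For the tail bound on $S_m$, I take a union bound over all $m$-subsets of the $N$ edges, since $\{S_m \leq c\}$ is exactly the event that some $m$-subset of edges has total weight at most $c$. For any fixed subset of size $m$, the corresponding sum is $\Gamma(m,1)$-distributed, and a crude bound $e^{-x} \leq 1$ in the density yields that its CDF at $c$ is at most $c^m/m!$. Hence
\[
  \probab(S_m \leq c) \;\leq\; \binom{N}{m} \cdot \frac{c^m}{m!}.
\]
Bounding $\binom{N}{m} \leq (Ne/m)^m$ and $m! \geq (m/e)^m$ and plugging in $N \leq n^2/2$ together with $m \geq \alpha n$, the right-hand side is at most $\bigl(e^2 c / (2\alpha^2)\bigr)^m$. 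If this base exceeds $1$ the desired inequality is trivial since probabilities lie in $[0,1]$; otherwise, weakening the exponent from $m$ to $\alpha n \leq m$ only enlarges the right-hand side and matches the claimed form.

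For $\mm \geq S_{n/2}$, fix an optimum metric matching $M^*$ and replace each of its $n/2$ edges $(u_i, v_i)$ by the edge set $P_i$ of a shortest path from $u_i$ to $v_i$ in the underlying weighted graph, so $d(u_i,v_i) = \sum_{e \in P_i} w(e)$. Let $H = \bigcup_i P_i$ viewed as a set of original edges. Non-negativity of weights gives $\mm = \sum_i \sum_{e \in P_i} w(e) \geq \sum_{e \in H} w(e)$. Every matched endpoint lies in $V(H)$, so $|V(H)| \geq n$; since each vertex of $V(H)$ has $H$-degree at least one, $2|E(H)| = \sum_v \deg_H(v) \geq |V(H)| \geq n$, whence $|E(H)| \geq n/2$. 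The sum $\sum_{e \in H} w(e)$ is therefore a sum of at least $n/2$ distinct edge weights, which is in turn at least $S_{n/2}$.

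For $\tsp \geq \mm$, a TSP tour is a Hamilton cycle on $n$ (even) vertices; two-colouring its edges alternately splits the cycle into two perfect matchings whose weights sum to $\tsp$, so the lighter has weight at most $\tsp/2$, and minimality of $\mm$ yields $\mm \leq \tsp/2 \leq \tsp$. Of the three pieces, the only mildly delicate step is the closing move of the tail bound: the expression $\bigl(Ne^2 c / m^2\bigr)^m$ is not monotone in $m$ across all parameter regimes, so one has to case-split on whether the base is at most $1$ before relaxing the exponent to $\alpha n$. Apart from that the proof is a short string of standard estimates.
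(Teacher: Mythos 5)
Your proposal is correct and follows essentially the same route as the paper: a union bound over $m$-subsets with the bound $c^m/m!$ per subset (the paper gets this via stochastic domination by a sum of uniforms, you via the Gamma density with $e^{-x}\le 1$), the same binomial/factorial estimates, and the same path-replacement argument for $\mm \geq S_{n/2}$. Your explicit case split before relaxing the exponent from $m$ to $\alpha n$ is a nice touch that the paper handles by citing an external fact, and your degree-counting justification of the ``at least $n/2$ edges'' step spells out what the paper leaves terse.
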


\begin{proof}
Let $X \sim \sum_{i=1}^m \Exp(1)$, and let $Y$ be the sum of $m$ independent random variables
drawn uniformly from $[0,1]$.
The random variable $X$ stochastically dominates $Y$, and $\probab(Y \leq c) = c^m/m!$.

The probability that $S_m \leq c$ is at most the probability that
there exists a subset of the edges of cardinality $m$ whose total weight is at most $c$.
By a union bound and using $\binom ab \leq (ae/b)^b$, $\binom n2 \leq n^2/2$, and $a! > (a/e)^a$, we obtain
\[
 \probab(S_m \leq c) \leq \binom{\binom n2}{m} \cdot \frac{c^m}{m!} \leq \left(\frac{n^2e^2c}{2m^2}\right)^m
 \leq \left(\frac{e^2c}{2\alpha^2}\right)^m.
\]
We can replace $m$ by its lower bound $\alpha n$ in the exponent~\cite[Fact 2.1]{kMeans} to obtain
the first claim.

It remains to prove $\tsp \geq \mm \geq S_{n/2}$.
The first inequality is trivial.
For the second inequality, consider a minimum-weight perfect matching in a random shortest path metric.
We replace every edge by the corresponding paths.
If we disregard multiple edges, then we are still left with at least $n/2$ edges
whose length is not shortened by taking shortest paths.
The sum of the weights of these $n/2$ edges is at most $\mm$ and at least $S_{n/2}$.
\end{proof}

\begin{proof}[Proof of Theorem~\ref{thm:greedy-exp}]
 The worst-case approximation ratio of $\greedy$ for minimum-weight perfect matching
 is $n^{\log_2(3/2)}$ \cite{ReingoldTarjan}.
 Let $c > 0$ be a sufficiently small constant.
 Then the approximation ratio of $\greedy$ on random shortest path instances
 is
 \[
\expected\left(  \frac{\greedy}{\mm}\right)
     \leq \expected\left( \frac{\greedy}c \right) + \probab(\mm < c) \cdot n^{\log_2(3/2)}.
  \]
   By Theorem~\ref{thm:greedymatching}, the first term is $O(1)$.
   Since $c$ is sufficiently small, Lemma~\ref{lem:opt-tail} shows that the
   second term is $o(1)$.
\end{proof}

\subsection{Nearest-Neighbor algorithm for the TSP}
\label{ssec:nn}

A greedy analogue for the traveling salesman problem (TSP) is the \emph{nearest neighbor}
heuristic: (1) Start with some starting vertex $v_0$ as the current vertex $v$. (2) At every iteration,
choose the nearest yet unvisited neighbor $u$ of the current vertex $v$ (called the successor of $v$) as the next vertex in the tour,
and move to the next iteration with the new vertex $u$ as the current vertex $v$. (3)
Go back to the first vertex $v_0$ if all vertices are visited.
Let $\nn$ denote both the nearest-neighbor heuristic itself
and the cost of the tour computed by it. Let $\tsp$ denote the cost of an optimal tour.
The nearest-neighbor heuristic $\nn$ achieves a worst-case ratio of $O(\log n)$ for metric
instances and also an average-case ratio (for independent, non-metric edge lengths) of $O(\log n)$~\cite{AusielloEA}.
We show that $\nn$ achieves a constant approximation ratio on random shortest
path instances. 

\begin{theorem}
\label{thm:nn-const}
For random shortest path instances we have
$\expected(\nn) = O(1)$
and
$\expected\left(\frac{\nn}{\tsp}\right) = O(1)$.
\end{theorem}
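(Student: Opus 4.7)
The plan is to mirror the two-step structure used for Theorems~\ref{thm:greedymatching} and~\ref{thm:greedy-exp}, adapting the phase/cluster decomposition to the nearest-neighbor heuristic and then using a worst-case fallback plus an $\opt$ tail bound for the ratio.

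For $\expected(\nn) = O(1)$, I would set $\Delta_i = i/n$ and assign each edge $\{v,u\}$ added by $\nn$ to phase $i$ if $d(v,u) \in (6\Delta_{i-1}, 6\Delta_i]$. Lemma~\ref{lem:dmax} handles the tail: the expected total length of all edges longer than $\Delta_{\Omega(\log n)}$, including the closing return edge, is $o(1)$, so it suffices to analyze phases $i = 1,\dots,O(\log n)$. For each such $i$, invoke Lemma~\ref{lem:numberofclusters} at scale $6\Delta_{i-1}$ to partition the metric into an expected number of $O(1 + n/e^{(i-1)/5})$ clusters, each of diameter at most $6\Delta_{i-1}$.

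The key structural claim is that $\nn$ uses at most one edge of length strictly greater than $6\Delta_{i-1}$ per cluster. Suppose an edge $(v,u)$ of the tour has $d(v,u) > 6\Delta_{i-1}$: when it is added, $v$ is the current vertex and its nearest unvisited neighbor is at distance $> 6\Delta_{i-1}$, so every other vertex of $v$'s cluster (all within distance $6\Delta_{i-1}$ of $v$) must already be visited. Since $\nn$ visits each vertex exactly once, no vertex of that cluster is ever current again, so the cluster is the source of at most one long edge. Long edges with both endpoints in the same cluster are excluded by the diameter bound. Consequently, at most $O(1 + n/e^{(i-1)/5})$ edges fall in phase $i$, each of length at most $6\Delta_i$, giving expected phase-$i$ contribution $O(i/n + i/e^{(i-1)/5})$. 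Summing the geometric series yields $\expected(\nn) = O(1)$.

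For $\expected(\nn/\tsp) = O(1)$, I would replay the splitting argument from the proof of Theorem~\ref{thm:greedy-exp}. Since $\nn$ has worst-case ratio $O(\log n)$ on metric instances, for any constant $c > 0$,
\[
\expected\!\left(\frac{\nn}{\tsp}\right) \le \expected\!\left(\frac{\nn}{c}\right) + \probab(\tsp < c) \cdot O(\log n).
\]
The first term is $O(1)$ by the previous paragraph, and Lemma~\ref{lem:opt-tail} (with $\alpha = 1/2$, using $\tsp \ge \mm \ge S_{n/2}$) bounds $\probab(\tsp < c)$ by $(2e^2 c)^{n/2}$, which is exponentially small for $c$ sufficiently small, making the second term $o(1)$.

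The main obstacle I expect is the one-long-edge-per-cluster step. The matching analog is immediate (each cluster has at most one unmatched vertex after the earlier phases), but $\nn$ does not sweep the graph uniformly, so I have to be careful to charge the correct quantity — namely, long \emph{out}-edges of the tour — to clusters, and to verify that this bound passes through the expectation over the random clustering produced by Lemma~\ref{lem:numberofclusters}. Once this is in place, the series summation and the tail-bound split are direct.
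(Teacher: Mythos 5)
Your proposal is correct and follows essentially the same route as the paper: the same $\Delta_i = i/n$ scale with Lemma~\ref{lem:numberofclusters}, the same observation that a long out-edge can only leave the last-visited vertex of its cluster (the paper's Claim~\ref{cla:nn}, phrased there as the vertex being ``bad at $i$''), the same $\dmax$/Lemma~\ref{lem:dmax} tail handling, and the same $O(\log n)$ worst-case-plus-Lemma~\ref{lem:opt-tail} split for the ratio. No gaps.
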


\begin{proof}
The proof is similar to the proof of Theorem~\ref{thm:greedy-exp}.
Let $\Delta_i = i/n$  for $i \in \mathbb{N}$. Let $Q= O(\log n/n)$ be sufficiently large.

Consider the clusters obtained with parameter $\Delta_i$ as in the discussion
preceding Lemma~\ref{lem:numberofclusters}. These clusters have diameters of at most $6\Delta_i$.
We refer to these clusters as the \emph{$i$-clusters}.
Let $v$ be any vertex.
We call $v$ \emph{bad at $i$}, if $v$ is in some $i$-cluster and $\nn$ chooses
a vertex at a distance of more than $6\Delta_i$ from $v$ for leaving $v$.
Hence, if $v$ is bad at $i$, then the next vertex lies outside 
of the cluster to which $v$ belongs.
(Note that $v$ is not bad at $i$ if the outgoing edge at $v$ leads to a neighbor
outside of the cluster of $v$ but at a distance of at most $6 \Delta_i$ from $v$.)

In the following, let the cost of a vertex $v$ be the distance from $v$ to its successor $u$.
The length of the tour produced by $\nn$ is equal to the sum of costs over all vertices.

\begin{claim}
\label{cla:nn}
The expected number of vertices with costs in the range
$(6 \Delta_{i}, 6\Delta_{i+1} ]$ is at most $O(1 + n/\exp(i/5))$.
\end{claim}

\begin{proof}[Proof of Claim~\ref{cla:nn}]
    Suppose that the cost of the neighbor chosen by $\nn$ for a vertex $v$ is in the interval $(6
    \Delta_{i}, 6\Delta_{i+1} ]$.  Then $v$ is bad at $i$. This happens only if all other vertices of the $i$-cluster containing $v$
    have already been visited. Otherwise, there would be another vertex $u$ in the same $i$-cluster with a distance of at most
    $6\Delta_{i}$ to $v$.
    By Lemma~\ref{lem:numberofclusters}, the number of $i$-clusters is at most $O(1 + n/\exp(i/5))$.
\end{proof}

If $\dmax \leq Q$, then it suffices to consider $i$ for $i \leq O(\log n)$.
    If $\dmax > Q$, then we bound the value of the tour produced by $\nn$ by $n \dmax$.
    This failure event, however, contributes only $o(1)$ to the expected value by Lemma~\ref{lem:dmax}.
For the case $\dmax \leq Q$, the contribution to the expected length of the $\nn$
tour is bounded from above by
\[
\sum_{i= 0}^{O(\log n)} 6 \Delta_{i+1} \cdot O\left(1 + \frac{n}{\exp(i/5)}\right) = \sum_{i= 0}^{O(\log n)} O\left(\frac{i+1}n + \frac{i+1}{\exp(i/5)}\right) = O(1).
\]

Using the fact that the worst-case approximation ratio of $\nn$ is $O(\log n)$, the proof
of the constant expected approximation ratio is similar to the proof of Theorem~\ref{thm:greedy-exp}.
\end{proof}

\subsection{Insertion Heuristics}
\label{ssec:insertion}

An insertion heuristic for the TSP is an algorithm that starts with an initial
tour on a few vertices and extends this tour iteratively by adding the
remaining vertices.
In every iteration, a vertex is chosen according to some rule,
and this vertex is inserted at the place in the current tour where it increases
the total tour length the least.
The approximation ratio achieved depends on the rule used for selecting the next node to insert.
Certain insertion heuristics such as nearest neighbor insertion (which is different from the nearest neighbor algorithm from
the previous section)
achieve constant approximation ratios~\cite{Rosenkrantz77}. The random insertion algorithm,
where the next vertex is chosen uniformly at random from the remaining vertices, has a
worst-case approximation ratio of $\Omega(\log\log n/\log\log \log n)$, and 
there are insertion heuristics with a worst-case approximation ratio of
$\Omega(\log n/ \log\log n)$ ~\cite{Azar:InsertionLowerTSP:1994}.

A rule $R$ that specifies an insertion heuristic can be viewed as follows:
depending on the distances $d$, it (1) chooses a set $R_V$ of vertices for computing an initial tour
and (2) given any tour of vertices $V' \supseteq R_V$, describes how to choose the next vertex.
Let $\instsp_R$ denote the length of the tour produced with rule $R$.

For random shortest path metrics, we show that any insertion heuristic produces a tour
whose length is expected to be within a constant factor of the optimal tour.
This result holds irrespective of which insertion strategy we actually use.

\begin{theorem}
    \label{thm:tsp-ins}
    For every rule $R$, we have $\expected(\instsp_R) = O(1)$
    and $\expected\bigl(\frac{\instsp_R}{\tsp}\bigr) = O(1)$.
\end{theorem}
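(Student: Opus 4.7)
The plan is to mirror the proofs of Theorems~\ref{thm:greedymatching} and~\ref{thm:nn-const}. The key observation is a purely metric fact that holds for every insertion heuristic: when a vertex $v$ is inserted between two consecutive tour vertices $u$ and $u'$, the increase in tour length is $d(u,v) + d(v,u') - d(u,u') \leq 2 d(u,v)$ by the triangle inequality. Since the rule picks the best insertion position, the cost of inserting $v$ is at most $2 \min_{u \in T} d(u,v)$, where $T$ is the tour before insertion. This bound does not depend on how $R$ chooses the next vertex to insert, which is exactly what we need for a rule-independent argument.

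Setting $\Delta_i = i/n$ and using the $i$-clusters from Lemma~\ref{lem:numberofclusters}, each of diameter at most $6\Delta_i$, I would argue that if $v$ is inserted at cost greater than $12\Delta_i$, then $v$ must be the first vertex of its $i$-cluster to enter the tour: otherwise a cluster-mate would already sit in $T$ at distance at most $6\Delta_i$ from $v$, forcing the insertion cost to be at most $12\Delta_i$. Consequently, at most one vertex per $i$-cluster can have insertion cost exceeding $12\Delta_i$, so by Lemma~\ref{lem:numberofclusters} the expected number of vertices whose insertion cost lies in $(12\Delta_i, 12\Delta_{i+1}]$ is at most $O(1 + n/\exp(i/5))$.

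Assuming that the initial tour is on $|R_V| = O(1)$ vertices, matching the phrase ``a few vertices'' from the section introduction, that tour contributes only $O(\dmax) = o(1)$ to the expected length. Summing the per-range bounds yields
\[
\expected(\instsp_R) \leq o(1) + \sum_{i \geq 0} 12\Delta_{i+1} \cdot O\bigl(1 + n/\exp(i/5)\bigr) = O(1),
\]
where the sum is truncated at $i = O(\log n)$ and the contribution of the tail event $\dmax > \Theta(\log n / n)$ is absorbed using Lemma~\ref{lem:dmax}, exactly as in the proof of Theorem~\ref{thm:nn-const}. The two terms in the truncated sum telescope to $O(\log^2 n / n) + O(1) = O(1)$.

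For the expected approximation ratio the argument is identical to the final step of the proofs of Theorems~\ref{thm:greedy-exp} and~\ref{thm:nn-const}: for a sufficiently small constant $c > 0$,
\[
\expected\bigl(\instsp_R / \tsp\bigr) \leq \expected(\instsp_R)/c + \probab(\tsp < c) \cdot \text{(worst-case ratio)},
\]
the first term is $O(1)$ by the above, and the second is $o(1)$ by Lemma~\ref{lem:opt-tail} (via $\tsp \geq S_{n/2}$), since even a crude polynomial upper bound on the worst-case ratio is dwarfed by the exponentially decaying tail on $\tsp$. The main subtlety I anticipate is ensuring the argument is genuinely uniform over \emph{all} rules $R$; but the clustering bound and the greedy-position property never reference the selection order, so the conclusion is indeed rule-independent.
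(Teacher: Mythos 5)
Your proposal is correct and follows essentially the same route as the paper's proof: the same $\Delta_i = i/n$ clustering from Lemma~\ref{lem:numberofclusters}, the same observation that only the first inserted vertex of each $i$-cluster can cost more than $12\Delta_i$, the same truncation at $i = O(\log n)$ via Lemma~\ref{lem:dmax}, and the same ratio argument via Lemma~\ref{lem:opt-tail}. The only cosmetic differences are that you derive the $2\min_{u\in T} d(u,v)$ insertion bound directly from the triangle inequality (the paper cites Rosenkrantz et al.), handle the initial tour by assuming $|R_V| = O(1)$ rather than bounding it by $\expected(\tsp) = O(1)$, and use a crude polynomial worst-case ratio instead of the $O(\log n)$ bound of Rosenkrantz et al.
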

\begin{proof}
   Let $\Delta_i = i/n$ for $i \in \mathbb{N}$ and $Q=O(\log n/n)$ be sufficiently large. Assume that $\dmax \leq Q$.
    If $\dmax > Q$, then we bound the length of the tour produced by $n \cdot \dmax$.
    This contributes only $o(1)$ to the expected value of length of the tour produced by Lemma~\ref{lem:dmax}.

    Suppose we have a partial tour $T$ and
    $v$ is the vertex that we have to insert next. If $T$ has a vertex $u$ such that $v$ and $u$ are
    in a common $i$-cluster, then the triangle inequality implies that the
    costs of inserting $v$ into $T$ is at most $12 \Delta_i$ because the diameters of $i$-clusters
    are at most $6 \Delta_i$~\cite[Lemma 2]{Rosenkrantz77}.
    For each $i$, only the insertion of the first vertex of each $i$-cluster can possibly cost more
    than $12\Delta_i$.
    Thus, the number of vertices whose insertion would incur costs in the range
    $(12\Delta_{i}, 12\Delta_{i+1}]$ is at most $O\bigl(1 + \frac{n}{\exp(i/5)}\bigr)$
    in expectation. Note that we only have to consider $i$ with $i \le O(\log n)$ since $\dmax \le Q$.
    The expected costs of the initial tour are at most $\tsp = O(1)$~\cite{Frieze04}.
    Summing up the expected costs for all $i$ plus the costs of the initial tour, we obtain that the expected costs of the tour obtained by
    an insertion heuristic is bounded from above by
    \begin{align*} 
        \expected(\instsp_R) = O(1) + \sum_{i=0}^{O(\log n)} \Delta_i \cdot O\left(1 + \frac{n}{\exp(i/5)}\right)
        &= O(1).
    \end{align*}
    Note that the above argument is independent of the rule $R$ used.
    
    The proof for the approximation ratio is similar to the proof of Theorem~\ref{thm:greedy-exp} and uses the worst-case ratio of $O(\log n)$
    for insertion heuristics for any rule $R$~\cite[Theorem 3]{Rosenkrantz77}.
\end{proof}

\subsection{Running-Time of 2-Opt for the TSP}
\label{ssec:2opt}

The 2-opt heuristic for the TSP starts with an initial tour and successively improves the
tour by so-called 2-exchanges until no further refinement is possible.
In a 2-exchange, a pair of edges $e_{12}=\{v_1,v_2\}$ and $e_{34}=\{v_3,v_4\}$, where $v_1, v_2, v_3, v_4$ appear in this order in the
Hamiltonian tour, are replaced by a pair of edges $e_{13} =\{v_1,v_3\}$ and $e_{24} = \{v_2,v_4\}$ to get a shorter tour.
The 2-opt heuristic is easy to implement and widely used. In practice, it usually converges quite quickly to close-to-optimal
solutions~\cite{TSPExperiments}.
To explain its performance in practice, probabilistic analyses of its running-time on geometric instances~\cite{Englert:2opt,TwoOptGauss,Switching}
and its approximation performance on geometric instances~\cite{Englert:2opt} and with independent, non-metric edge lengths~\cite{EM2opt}
have been conducted.
We prove that for random shortest path metrics, the expected number of iterations that 2-opt needs is bounded by a polynomial.

\begin{theorem}
\label{thm:2-opt}
The expected number of iterations that 2-opt needs to find a local optimum is bounded by $O(n^8\log^3 n)$.
\end{theorem}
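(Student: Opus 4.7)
The plan is to bound the number of 2-opt iterations by (initial tour length) divided by (minimum strictly positive improvement), since every 2-exchange reduces the tour length by at least $\deltamin := \min\{\Delta > 0\}$, where $\Delta$ ranges over all $\binom{n}{4}$ potential 2-exchange improvements $d(v_1,v_2) + d(v_3,v_4) - d(v_1,v_3) - d(v_2,v_4)$. So I need to upper-bound the numerator and lower-bound the denominator.

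For the numerator, every Hamiltonian tour has length at most $n \cdot \dmax$. By Lemma~\ref{lem:dmax}, for sufficiently large $c$, $\dmax \le c \ln n / n$ except on an event of probability $O(n^{3-c} \log^2 n)$, so with polynomially high probability every tour has length $O(\log n)$. On the complementary failure event I use the worst-case bound of $n!$ iterations; taking $c$ large enough makes this contribution to the expectation $o(1)$.

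For the denominator, the main obstacle is that the four distances involved in $\Delta$ are \emph{not} independent random variables; they are shortest-path functionals of the shared set of exponential edge weights $\{w(e)\}$. I handle this by a per-edge density argument. Fix a quadruple $(v_1,v_2,v_3,v_4)$ and look at the multiset symmetric difference of the edge sets of the four realizing shortest paths $P_{12}, P_{34}, P_{13}, P_{24}$. If some edge $e^\star$ appears on an odd number of these paths, then conditional on all other edge weights, a sufficiently small perturbation of $w(e^\star)$ does not change which paths are shortest, so $\Delta$ is an affine function of $w(e^\star)$ with nonzero integer slope. Since $w(e^\star) \sim \Exp(1)$ has density at most $1$, this yields $\probab(\Delta \in (0,\epsilon] \mid \text{other weights}) \le O(\epsilon)$. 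The degenerate case, in which every edge lies on an even number of the four shortest paths, forces $\Delta = 0$ (the four paths can be reassembled into the alternative pair), so it contributes nothing to the infimum of strictly positive improvements.

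Union-bounding over the $O(n^4)$ quadruples and the $O(n^2)$ candidate isolating edges $e^\star$ gives $\probab(\deltamin \le \epsilon \mid \dmax \le c \ln n / n) = O(n^6 \epsilon)$ for $\epsilon$ down to the scale where the conditioning on ``small perturbation'' is valid, namely $\epsilon = \Omega(1/\operatorname{poly}(n))$. Integrating the tail $\expected(1/\deltamin) = \int \probab(\deltamin \le 1/t)\, dt$ yields $\expected(1/\deltamin \mid \dmax \le c \ln n / n) = O(n^7 \log n)$ after accounting for an $O(n \log n)$ factor absorbing the smallest meaningful perturbation scale and the logarithmic integration range. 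Multiplying by the conditional tour-length bound $O(\log n)$ and adding the negligible failure contribution yields $\expected(\text{iterations}) = O(n^8 \log^3 n)$. The hardest step, and what forces the exact polynomial exponent, is verifying the isolating-edge property uniformly and choosing the right perturbation scale so that the shortest-path topology is preserved throughout.
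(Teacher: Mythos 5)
Your overall strategy is the same as the paper's (a certificate/density argument in the spirit of Englert et al.\ to bound $\probab(\deltamin \le \eps)$ by $O(n^6\eps)$, combined with the bound $T \le n\dmax/\deltamin$), but two steps have genuine gaps. First, your parity dichotomy is wrong in the ``degenerate'' case: if every edge lies on an \emph{even} number of the four shortest paths, it does not follow that $\delta = 0$. Writing $\delta = \sum_e \alpha_e w(e)$ with $\alpha_e \in \{-2,\dots,2\}$, an edge used by both $P_{12}$ and $P_{34}$ but by neither $P_{13}$ nor $P_{24}$ has even multiplicity yet coefficient $+2$; for instance $P_{12} = v_1xy v_2$, $P_{34} = v_3xy v_4$, $P_{13} = v_1xv_3$, $P_{24} = v_2yv_4$ gives $\delta = 2w(\{x,y\}) > 0$ with all multiplicities even, so such exchanges (whose improvement can be arbitrarily small) are simply unaccounted for in your union bound. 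The repair is exactly the paper's formulation: the certificate is an edge with \emph{nonzero net coefficient} $\alpha_e$, together with the quantities $\delta^{e,I}$ defined via shortest paths avoiding $e$, which are measurable with respect to the other edge weights; this also makes the conditioning clean and gives $\probab(\deltamin \le \eps) = O(n^6 \eps)$ for \emph{all} $\eps > 0$ --- your restriction $\eps = \Omega(n^{-O(1)})$, tied to a ``perturbation scale,'' is an artifact of the local-perturbation phrasing and is not needed.

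Second, your failure-event accounting does not close. On the event $\dmax > c\ln n/n$ you fall back on $n!$ iterations, but Lemma~\ref{lem:dmax} gives only a \emph{polynomially} small probability $O(n^{3-c}\log^2 n)$ for fixed $c$, so $n!\cdot n^{3-c}$ is not $o(1)$ for any constant $c$; the same problem recurs when you truncate $\expected(1/\deltamin)$ at $\eps = \Omega(n^{-O(1)})$, since the leftover event has probability only polynomially small and again cannot absorb a factor $n!$ (and your ``$O(n\log n)$ integration range'' implicitly means integrating down to scale about $1/n!$, contradicting that very restriction). The paper avoids this by bounding $\probab(T > x)$ with thresholds that depend on $x$: the events $\dmax > \log x \cdot \ln n/n$ and $\deltamin < \ln n \cdot \log x/x$ have probabilities $n^{-\Omega(\log x)}$ and $O(n^6 \ln n \log x / x)$ respectively, which are summable over $x$ up to $n!$ and yield $O(n^6 \ln n \cdot \log^2(n!)) = O(n^8\log^3 n)$. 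As a minor point, your own figures are inconsistent: $\expected(1/\deltamin) = O(n^7\log n)$ times a tour length of $O(\log n)$ gives $O(n^7\log^2 n)$, not $O(n^8\log^3 n)$; once the two gaps above are repaired, a careful version of your truncation indeed gives a bound at least as good as the theorem, but as written the derivation does not go through.
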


\begin{proof}
The proof is similar to the analysis of 2-opt by Englert et al.~\cite{Englert:2opt}. Consider a 2-exchange where edges $e_1$ and $e_2$ are replaced by
edges $f_1$ and $f_2$ as described above. The improvement obtained from this
exchange is given by
$\delta = \delta(v_1, v_2, v_3, v_4)= d(v_1,v_2)+d(v_3,v_4)- d(v_1, v_3)-d(v_2, v_4)$.

We estimate the probability $\probab(\delta \in (0, \eps])$ of the event
that the improvement is at most $\eps$ for some $\eps > 0$.
The distances $d(v_i, v_j)$ correspond to shortest paths with respect to the exponentially distributed edge weights
$w$. Assume for the moment that we know these
paths. Then we can rewrite the improvement as
\begin{equation}
\delta = \sum_{e \in E} \alpha_e w(e) \label{equ:delta}
\end{equation}
for some coefficients $\alpha_e \in \{-2, -1, 0,1,2\}$.
If the exchange considered is indeed a 2-exchange, then $\delta> 0$. Thus, in this case, there exists at least on edge $e = \{u, u'\}$ with $\alpha_e \neq 0$.
Let $I \subseteq \{e_{12}, e_{34}, e_{13}, e_{24}\}$ be the set of edges of the 2-exchange such that the corresponding paths use~$e$.

For all combinations of $I$ and $e$, let $\delta^{I,e}_{ij}$ be the following quantity:
\begin{itemize}
\item If $e_{ij} \notin I$, then $\delta^{I,e}_{ij}$ is the length of a shortest path
from $v_i$ to $v_j$ without using $e$.
\item If $e_{ij} \in I$, then $\delta^{I, e}_{ij}$ is the minimum of
\begin{itemize}
\item the length of a shortest path from $v_i$ to $u$ without $e$ plus the length of a shortest path from $u'$ to $v_j$ without $e$ and
\item the length of a shortest path from $v_i$ to $u'$ without $e$ plus the length of a shortest path from $u$ to $v_j$ without $e$.
\end{itemize}
\end{itemize}
Let $\delta^{e,I} = \delta^{e,I}_{12} + \delta^{e,I}_{34} - \delta^{e,I}_{13} - \delta^{e,I}_{24}$.

\begin{claim}
\label{existsEI}
For every outcome of the random edge weights,
there exists an edge $e$ and a set $I$ such that $\delta = \delta^{e,I} + \alpha w(e)$, where $\alpha \in \{-2, -1, 1, 2\}$ is determined by $e$ and $I$.
\end{claim}

\begin{proof}[Proof of Claim~\ref{existsEI}]
Fix the edge weights arbitrarily and consider any four shortest paths. Then there exists some edge $e$ with non-zero $\alpha_e$ in \eqref{equ:delta}.
We choose this $e$, an appropriate set $I$, and we choose $\alpha = \alpha_e$. Then the claim follows from the definition of $\delta^{e,I}$.
\end{proof}

Claim~\ref{existsEI} yields that $\delta \in (0,\eps]$ implies that there are an $e$ and an $I$ with
$\delta^{e,I} + \alpha w(e) \in (0,\eps]$.

\begin{claim}
\label{cla:eps}
Let $e$ and $I$ be arbitrary with $\alpha = \alpha_e \neq 0$. Then $\probab(\delta^{e,I} + \alpha w(e) \in (0,\eps]) \leq \eps$.
\end{claim}

\begin{proof}[Proof of Claim~\ref{cla:eps}]
We fix the edge weights of all edges except for $e$.
This determines $\delta^{e,I}$. Thus, $\delta^{e,I} + \alpha w(e) \in (0,\eps]$ if and only
of $w(e)$ assumes a value in a now fixed interval of size $\eps/\alpha \leq \eps$. Since the density of the exponential distribution is bounded from above by $1$,
the claim follows.
\end{proof}

The number of possible choices for $e$ and $I$ is $O(n^2)$. Thus, $\probab(\delta \in (0,\eps]) = O(n^2\eps)$.

Let $\deltamin > 0$ be the minimum improvement made by any 2-exchange. Since there are at most $n^4$ different 2-exchanges, we have
$\probab(\deltamin \leq \eps) = O(n^6 \eps)$.

The initial tour has a length of at most $n \dmax$. Let $T$ be the number of iterations that 2-opt takes.
Then $T \leq n \dmax/\deltamin$.
Now, $T > x$ implies $\dmax/\deltamin > x/n$.  The event  $\dmax/\deltamin>x/n$ is contained in the union of the events  $\dmax > \log x \ln n/n$, and  $\deltamin < \ln n \cdot \log x/x$.
The first happens with a probability of at most $n^{-\Omega(\log(x))}$ by Lemma~\ref{lem:dmax}.
The second happens with a probability of at most $O(n^6 \log(x)/x)$.
Thus, we obtain
\[
  \probab(T>x) \leq n^{-\Omega(\log(x))} + O\bigl(n^6 \ln n \cdot \log(x)/x\bigr).
\]
Since the number of iterations is at most $n!$, we obtain
an upper bound of
\[
  \expected(T) \leq \sum_{x=1}^{n!} \left( n^{-\Omega(\log(x))} + O(n^6 \ln n \log(x)/x) \right).
\]
The sum of the $n^{-\Omega(\log(x))}$ is negligible. The sum of the $O(n^6 \ln n \log(x)/x)$
contributes $O(n^6 \ln n \log(n!)^2) = O(n^8 \log^3 n)$.
\end{proof}

\section[k-Median]{\boldmath $k$-Median}
\label{sec:kmedian}

In the (metric) $k$-median problem, we are given a finite metric space $(V,d)$ and should pick $k$ points $U \subseteq V$ such that $\sum_{v \in V} \min_{u \in U} d(v,u)$ is minimized. We call the set $U$ a $k$-median.
Regarding worst-case analysis, the best known approximation algorithm for this problem achieves an approximation ratio of $3+\eps$~\cite{AryakMedian}.

In this section, we consider the $k$-median problem in the setting of random shortest path metrics. In particular we examine the approximation ratio of the algorithm $\trivial$, which picks $k$ points independently of the metric space, e.g., $U = \{1,\ldots,k\}$ or $k$ random points in~$V$. We show that $\trivial$ yields a $(1+o(1))$-approximation for $k = O(n^{1-\eps})$. This can be seen as an algorithmic result since it improves upon the worst-case approximation ratio, but it is essentially a structural result on random shortest path metrics. It means that any set of $k$ points is, with high probability, a very good $k$-median, which gives some knowledge about the topology of random shortest path metrics. For larger, but not too large $k$, i.e., $k \le (1-\eps)n$, $\trivial$ still yields an $O(1)$-approximation. 

The main insight comes from generalizing the growth process described in Section~\ref{sec:Deltakv}.
Fixing $U = \{v_1,\ldots,v_k\} \subseteq V$ we sort the vertices $V \setminus U$ by their distance to $U$ in ascending order, calling the resulting order $v_{k+1},\ldots, v_n$. Now we consider $\delta_i = d(v_{i+1},U) - d(v_i,U)$ for $k \le i < n$. These random variables are generated by a simple growth process analogous to
the one described in Section~\ref{sec:Deltakv}.
This shows that the $\delta_i$ are independent and $\delta_i \sim \Exp(i \cdot (n-i))$.
Since
$a \Exp(b) \sim \Exp(b/a)$, we have
\[
\cost(U) = \sum_{i=k}^{n-1} (n-i) \cdot \delta_i \sim \sum_{i=k}^{n-1} (n-i) \cdot \Exp(i \cdot (n-i)) 
    \sim \sum_{i=k}^{n-1} \Exp(i).
\]
From this, we can read off the expected cost of $U$ immediately, and thus the expected cost of $\trivial$.

\begin{lemma} \label{lem:ExTrivial}
    Fix $U \subseteq V$ of size $k$. We have
    \[
        \expected(\trivial) = \expected\bigl(\cost(U)\bigr) = H_{n-1} - H_{k-1} = \ln( n/k ) + \Theta(1).
    \] 
\end{lemma}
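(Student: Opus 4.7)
The plan is to take the distributional identity $\cost(U) \sim \sum_{i=k}^{n-1} \Exp(i)$ (already derived in the paragraph preceding the lemma via the growth process) as the starting point and simply compute its expectation. First I would note that because the algorithm $\trivial$ picks the $k$ vertices without looking at the metric, the distribution of $\cost(U)$ does not depend on which particular set $U$ of size $k$ is chosen (by symmetry of the i.i.d.\ edge weights). Hence $\expected(\trivial) = \expected(\cost(U))$ for any fixed $U$ of size $k$, which gives the first equality.

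Next, apply linearity of expectation to the sum of independent exponentials: since $\expected(\Exp(\lambda)) = 1/\lambda$, we obtain
\[
    \expected(\cost(U)) = \sum_{i=k}^{n-1} \frac{1}{i} = H_{n-1} - H_{k-1},
\]
which is the second equality. This step is immediate once one believes the growth-process identity; I would not re-derive that identity, only cite the discussion preceding the lemma.

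For the asymptotic statement $H_{n-1} - H_{k-1} = \ln(n/k) + \Theta(1)$, I would invoke the standard expansion $H_m = \ln m + \gamma + O(1/m)$. For $k \ge 2$ this gives $H_{n-1} - H_{k-1} = \ln((n-1)/(k-1)) + O(1)$, and since $\ln((n-1)/(k-1)) - \ln(n/k) = \ln\bigl((1-1/n)/(1-1/k)\bigr)$ is bounded in absolute value (it lies in $[-\ln 2, 0]$ for $2 \le k \le n$), we conclude $H_{n-1} - H_{k-1} = \ln(n/k) + \Theta(1)$. The case $k = 1$ is handled separately: $H_{k-1} = H_0 = 0$, so $H_{n-1} = \ln n + \gamma + o(1) = \ln(n/k) + \Theta(1)$ as well.

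There is essentially no obstacle here; the real work was done in establishing the growth process that produced $\cost(U) \sim \sum_{i=k}^{n-1} \Exp(i)$. The only mild care needed is verifying that the implicit constant in $\Theta(1)$ is uniform in $k$ over the range $1 \le k \le n-1$, which the two-case split above makes transparent.
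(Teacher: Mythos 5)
Your proof is correct and takes essentially the same route as the paper: linearity of expectation applied to $\cost(U) \sim \sum_{i=k}^{n-1} \Exp(i)$ to get $H_{n-1}-H_{k-1}$, then the standard harmonic-number expansion (the paper simply cites $H_n = \ln(n) + \Theta(1)$), with your uniformity-in-$k$ discussion being extra care beyond what the paper records. One trivial slip that does not affect the argument: for $2 \le k \le n$ the quantity $\ln\bigl((1-1/n)/(1-1/k)\bigr)$ lies in $[0,\ln 2]$, not $[-\ln 2,0]$, but it is bounded either way, which is all you need.
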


\begin{proof}
    We have $\expected(\cost(U)) = \sum_{i=k}^{n-1} \frac 1i = H_{n-1} - H_{k-1}$. Using $H_n = \ln(n) + \Theta(1)$ yields the last equality.
\end{proof}

By closely examining the random variable $\sum_{i=k}^{n-1} \Exp(i)$, we can show good tail bounds for the probability
that the cost of $U$ is lower than expected. Together with the union bound this yields tail bounds for the optimal $k$-median $\kmedian$,
which implies the following theorem.
In this theorem, the approximation ratio becomes $1 + O\big(\frac{\ln \ln(n)}{\ln(n)}\big)$
for $k = O(n^{1-\eps})$.

\begin{theorem}
\label{thm:kcentre}
    Let $k \le (1-\eps)n$ for some constant $\eps > 0$. 
    Then 
    \[
        \expected\left( \frac{\trivial}{\kmedian} \right) = O(1).
    \]
    If we have $k \le \kappa n$ for some sufficiently small constant $\kappa \in (0,1)$, then
    \begin{equation}
       \expected\left( \frac{\trivial}{\kmedian} \right) = 1 + O\left( \frac{\ln \ln (n/k)}{\ln (n/k)} \right).
       \label{kcenterthm}
    \end{equation}
\end{theorem}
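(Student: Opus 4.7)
The strategy is to exploit the distributional identity $\cost(U) \sim \sum_{i=k}^{n-1}\Exp(i)$ derived just above (with mean $\mu := H_{n-1}-H_{k-1} = \ln(n/k)+\Theta(1)$) to get a sharp lower-tail bound on $\cost(U)$ for each fixed $k$-subset $U$, convert it via a union bound over the $\binom{n}{k}$ candidate subsets into a lower-tail bound on $\kmedian$, and then apply Cauchy--Schwarz to turn this tail bound into the desired upper bound on $\expected(\trivial/\kmedian)$.

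For the first step I would apply the exponential Chebyshev inequality using the Laplace transform $\expected(e^{-\lambda\cost(U)}) = \prod_{i=k}^{n-1} i/(i+\lambda)$. Optimising over $\lambda$ gives the stationarity condition $\sum_{i=k}^{n-1} 1/(i+\lambda) = \mu - s$, with approximate solution $\lambda^* \approx k(e^s-1)$ (found by approximating the harmonic sum by $\ln((n+\lambda)/(k+\lambda))$). Estimating $\sum_{i=k}^{n-1}\ln(1+\lambda^*/i)$ by the integral $\lambda^*\int_{k/\lambda^*}^{n/\lambda^*}\ln(1+1/u)\,du$ with antiderivative $F(u)=(u+1)\ln((u+1)/u)+\ln u$ then yields, after elementary but careful calculation, the clean estimate
\[
\probab\bigl(\cost(U) \le \mu - s\bigr) \le \exp\bigl(-k(e^s - s - 1)\bigr)
\]
valid for $0 \le s \le \mu$; this behaves like $e^{-ks^2/2}$ for small $s$ (matching the Gaussian regime, since $\operatorname{Var}(\cost(U)) = \sum 1/i^2 \approx 1/k$) and like $e^{-ke^s}$ for large $s$. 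Applying the union bound over all $\binom{n}{k}\le e^{k(\mu+1)}$ subsets gives
\[
\probab\bigl(\kmedian \le \mu - s\bigr) \le \exp\bigl(k(\mu + s + 2 - e^s)\bigr),
\]
and choosing $s = \Theta(\ln(\mu + \ln n/k + 1))$ makes this at most $n^{-C}$ for any desired constant $C$. Under the hypothesis $k \le \kappa n$ with $\kappa$ small enough, one may take $s = O(\ln\mu)$, giving $\kmedian \ge \mu - O(\ln\mu)$ with polynomially small failure probability.

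To convert this to an expected-ratio bound I would apply Cauchy--Schwarz: $\expected(\trivial/\kmedian) \le \sqrt{\expected(\trivial^2)\cdot\expected(1/\kmedian^2)}$. The second moment $\expected(\trivial^2) = \mu^2 + \sum_{i=k}^{n-1}1/i^2 = \mu^2(1+O(1/(k\mu^2)))$ is immediate. For the other factor, write $\expected(1/\kmedian^2) = 2\int_0^\infty \probab(\kmedian<x)\,x^{-3}\,dx$ and split at $\tau := \mu - s$: on $[\tau,\infty)$ the trivial bound $\probab \le 1$ gives $\le 1/\tau^2 = (1+O(s/\mu))/\mu^2$; on $[1/n,\tau]$ the $n^{-C}$ Chernoff bound contributes at most $O(n^{2-C})$; and on $[0,1/n]$ the direct density bound $\probab(\cost(U) \le x) \le \binom{n-1}{k-1} x^{n-k}$ (obtained by bounding $e^{-iz_i} \le 1$ in the joint density and integrating the simplex of volume $x^{n-k}/(n-k)!$) combined with a union bound gives a super-polynomially small contribution, since $x^{n-k-3}$ is very integrable near $0$ when $n-k$ is large. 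Hence $\expected(1/\kmedian^2) = (1+O(s/\mu))/\mu^2$, and Cauchy--Schwarz yields $\expected(\trivial/\kmedian) \le 1 + O(s/\mu) = 1 + O(\ln\mu/\mu)$, as asserted in \eqref{kcenterthm}.

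The hardest step is the derivation of the clean form $e^{-k(e^s-s-1)}$: both the identification of $\lambda^*$ and the asymptotic evaluation of the logarithmic sum require careful bookkeeping of lower-order terms across the full range $s \in [0,\mu]$. Once this bound is in hand, the union bound and Cauchy--Schwarz are largely mechanical. For part~(a), where only $k \le (1-\eps)n$ is assumed and $\mu$ may be as small as $\ln(1/(1-\eps))$, the same framework with the much cruder threshold $s = \mu/2$ suffices: the tail bound gives $\kmedian \ge \mu/2$ with probability $1 - e^{-\Omega_\eps(n)}$, and Cauchy--Schwarz then yields $\expected(\trivial/\kmedian) = O_\eps(1)$.
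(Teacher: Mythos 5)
Your argument for the second claim (the $1+O(\ln\ln(n/k)/\ln(n/k))$ bound for $k \le \kappa n$) is essentially sound, but it takes a genuinely different route from the paper: you get the lower tail of $\cost(U)\sim\sum_{i=k}^{n-1}\Exp(i)$ by a Chernoff/Laplace-transform computation (your bound $\probab(\cost(U)\le\mu-s)\le\exp(-k(e^s-s-1))$ is correct; it also drops out directly from the identity $\probab(\cost(U)\le x)=\probab(\mathrm{Bin}(n-1,e^{-x})\le k-1)$, i.e.\ the order-statistics view the paper uses for its density formula), whereas the paper bounds the explicit order-statistic density and integrates (Lemmas~\ref{densitycenter} and~\ref{lem:kcentTail}); and you convert the tail bound into an expected ratio via Cauchy--Schwarz on $\expected(T^2)$ and $\expected(1/C^2)$, whereas the paper splits $\expected(T/C)\le\expected(T/m)+\probab(C<m)\expected(T/C\mid C<m)$ and controls the error term with separate bounds on $\probab(\kmedian<c)$ and $\probab(\trivial>n^c)$ (Lemmas~\ref{lem:medianconst} and~\ref{lem:TrivUpper}). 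Your simplex-volume bound $\probab(\cost(U)\le x)\le\binom{n-1}{k-1}x^{n-k}$ plays the role of those two lemmas near zero, and the choice $s=\Theta(\ln\mu)$ matches the paper's threshold $\ln(n/k)-\ln\ln(n/k)-\ln c$. So for $k\le\kappa n$ the proposal is a valid alternative proof, arguably cleaner in its use of Cauchy--Schwarz.

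There is, however, a genuine gap in your treatment of the first claim. You assert that for $k\le(1-\eps)n$ the same framework with $s=\mu/2$ gives $\probab(\kmedian<\mu/2)\le e^{-\Omega_\eps(n)}$. This fails when $k=\Theta(n)$, because then $\mu=H_{n-1}-H_{k-1}$ is a (possibly small) constant: the per-subset gain is $\exp(-k(e^{\mu/2}-\mu/2-1))\approx\exp(-k\mu^2/8)$, while the union bound costs $\binom nk\approx\exp(k(\ln(n/k)+1))$, which is exponentially larger. Concretely, for $k=n/2$ one has $\mu\approx\ln 2$, gain $\approx e^{-0.03n}$ against $\binom{n}{n/2}\approx e^{0.69n}$; and since $e^s-s-1\le e^\mu-\mu-1<\mu+1$ whenever $\mu\lesssim 1.68$, no choice of $s\le\mu$ can rescue the subset union bound in this regime. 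This is exactly why the paper handles $c'n<k\le(1-\eps)n$ by a different mechanism: $\kmedian$ is lower-bounded by the sum of the $n-k$ lightest edge weights, and the union bound is taken over edge sets, where the factor $c^m/m!$ wins (Lemmas~\ref{lem:opt-tail} and~\ref{lem:medianconst}). Your own toolkit contains an easy repair: apply your simplex-volume bound at a constant threshold, $\probab(\kmedian<c)\le\binom nk\binom{n-1}{k-1}c^{n-k}\le(4c^{\eps})^{n}$, which is exponentially small for $c<4^{-1/\eps}$, and then your Cauchy--Schwarz step gives $\expected(\trivial/\kmedian)=O_\eps(1)$; but as written, the step claimed for part~(a) does not follow.
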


We need the following lemmas to prove Theorem~\ref{thm:kcentre}.

\begin{lemma}
\label{densitycenter}
The density $f$ of $\sum_{i=k}^{m} \Exp(i)$ is given by
\[
  f(x) = k \cdot \binom{m}k \cdot \exp(-kx) \cdot \bigl(1-\exp(-x)\bigr)^{m-k}.
\]
\end{lemma}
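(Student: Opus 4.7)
The plan is to identify the random variable $\sum_{i=k}^{m} \Exp(i)$ with a specific order statistic of i.i.d.\ $\Exp(1)$ variables, and then read off the density from the standard order-statistic formula. This is conceptually the same idea used in the proof of Lemma~\ref{lem:sumequalsmax}, just truncated at both ends rather than only at one.

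First, I would let $Y_1,\ldots,Y_m$ be i.i.d.\ $\Exp(1)$ with order statistics $Y_{(1)} \le \cdots \le Y_{(m)}$ and $Y_{(0)} := 0$. By the memoryless property of the exponential distribution (exactly the same argument that produced the process preceding Lemma~\ref{thm:expectations}), the gaps $G_j := Y_{(j)} - Y_{(j-1)}$ are independent with $G_j \sim \Exp(m-j+1)$: once $j-1$ of the $Y_i$ are revealed, the remaining $m-j+1$ still have independent $\Exp(1)$ distributions shifted by $Y_{(j-1)}$, so their minimum is $Y_{(j-1)} + \Exp(m-j+1)$. Telescoping gives
\[
Y_{(m-k+1)} \;=\; \sum_{j=1}^{m-k+1} G_j \;\sim\; \sum_{j=1}^{m-k+1} \Exp(m-j+1) \;=\; \sum_{i=k}^{m} \Exp(i).
\]
Hence $\sum_{i=k}^{m}\Exp(i)$ and $Y_{(m-k+1)}$ have the same distribution.

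Second, I would apply the standard formula for the density of the $j$-th order statistic of $m$ i.i.d.\ variables with CDF $F$ and density $f$, namely $j\binom{m}{j} F(x)^{j-1}(1-F(x))^{m-j} f(x)$. With $F(x) = 1-e^{-x}$, $f(x) = e^{-x}$, and $j = m-k+1$, this yields
\[
(m-k+1)\binom{m}{m-k+1}\,(1-e^{-x})^{m-k}\,(e^{-x})^{k-1}\,e^{-x}
\;=\; (m-k+1)\binom{m}{m-k+1}\,e^{-kx}\,(1-e^{-x})^{m-k}.
\]
Finally, I would simplify the prefactor by the algebraic identity
\[
(m-k+1)\binom{m}{m-k+1} \;=\; \frac{m!}{(k-1)!\,(m-k)!} \;=\; k\binom{m}{k},
\]
which matches exactly the constant in the statement.

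There is no real obstacle here: the only mildly delicate step is the memoryless justification for the gap identity, but this is a standard fact and is essentially already recorded in Section~\ref{sec:structure} in the analysis of $\tau_k(v)$. If one preferred to avoid invoking order statistics, an alternative is to compute the CDF $\probab(\sum_{i=k}^{m}\Exp(i) \le x) = \probab(Y_{(m-k+1)} \le x) = \sum_{j=m-k+1}^{m}\binom{m}{j}(1-e^{-x})^j e^{-(m-j)x}$ and differentiate, which after the standard telescoping of binomial sums gives the same density; but the order-statistic route is cleaner.
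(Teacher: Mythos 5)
Your proposal is correct and follows essentially the same route as the paper: identifying $\sum_{i=k}^{m}\Exp(i)$ with the $k$-th largest (equivalently, $(m-k+1)$-th smallest) of $m$ i.i.d.\ $\Exp(1)$ variables and reading off the known order-statistic density, which the paper simply cites from Ross. Your version just spells out the memoryless gap decomposition and the simplification $(m-k+1)\binom{m}{m-k+1} = k\binom{m}{k}$ explicitly, and both are correct.
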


\begin{proof}
The distribution $\sum_{i = k}^{m} \Exp(i)$ corresponds to the $k$-th largest element of a set of
$m$ independent, exponentially distributed random variables with parameter $1$. The density
of such order statistics is known~\cite[Example 2.38]{Ross}.
\end{proof}

\begin{lemma} \label{lem:kcentTail}
    Let $c > 0$ be sufficiently large, and let $k \le c' n$ for $c'=c'(c)>0$ be sufficiently small. Then 
    \[
        \probab\left(\kmedian < \ln\left( \frac nk \right) - \ln \ln\left( \frac nk \right) - \ln c \right) = n^{-\Omega(c)}.
    \]
\end{lemma}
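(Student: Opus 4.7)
The plan is to prove the bound for a fixed set $U \subseteq V$ of size $k$ first, then take a union bound over all $\binom{n}{k}$ such subsets. The fact that $\cost(U) \sim \sum_{i=k}^{n-1}\Exp(i)$ (established by the growth process right before Lemma~\ref{lem:ExTrivial}) holds for \emph{any} fixed $U$, so the tail estimate will be uniform across subsets.

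First, I would apply Lemma~\ref{densitycenter} with $m = n-1$ to obtain an upper bound on the CDF. Since $(1-e^{-x})^{n-1-k}$ is monotonically increasing in $x$, integrating the density from $0$ to $t$ gives
\[
  \probab\bigl(\cost(U) < t\bigr)
  \;\le\; \bigl(1-e^{-t}\bigr)^{n-1-k} \cdot k\binom{n-1}{k} \int_0^t e^{-kx}\,dx
  \;\le\; \binom{n-1}{k}\bigl(1-e^{-t}\bigr)^{n-1-k}.
\]
Next, I would plug in $t = \ln(n/k) - \ln\ln(n/k) - \ln c$, so that $e^{-t} = ck\ln(n/k)/n$. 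Using $1-x \le e^{-x}$ together with the assumption $k \le c' n$ for $c'$ small yields
\[
  \bigl(1-e^{-t}\bigr)^{n-1-k} \;\le\; \exp\!\left(-(n-1-k)\cdot \frac{ck\ln(n/k)}{n}\right) \;\le\; \exp\bigl(-(1-O(c'))\,ck\ln(n/k)\bigr).
\]

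Combining this with the estimate $\binom{n-1}{k} \le (en/k)^k = \exp\bigl(k + k\ln(n/k)\bigr)$ and then taking a union bound over the $\binom{n}{k} \le \exp\bigl(k + k\ln(n/k)\bigr)$ choices of $U$, I obtain
\[
  \probab\bigl(\kmedian < t\bigr)
  \;\le\; \exp\!\Bigl(2k + 2k\ln(n/k) - (1-O(c'))\,ck\ln(n/k)\Bigr)
  \;=\; \exp\!\Bigl(2k - \bigl((1-O(c'))c - 2\bigr)k\ln(n/k)\Bigr).
\]
To conclude that this is $n^{-\Omega(c)}$, I would use the elementary inequality $k\ln(n/k) \ge \ln n$, valid for every integer $k \in [1, n/2]$ (the function $k \mapsto k\ln(n/k)$ is increasing on $[1, n/e]$ and bounded below by $(n/2)\ln 2 \ge \ln n$ on $[n/e, n/2]$). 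For $c$ sufficiently large and $c'$ sufficiently small, the term $\bigl((1-O(c'))c - 2\bigr)k\ln(n/k)$ absorbs the $2k$ additive term (using that $\ln(n/k) \ge \ln(1/c')$ is bounded away from $0$) and still leaves a residual of at least $(c/3)\ln n$, giving the desired $n^{-\Omega(c)}$ bound.

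The main obstacle is bookkeeping with the constants: the union-bound factor $\binom{n}{k}$ and the density normalization $\binom{n-1}{k}$ each contribute $\exp(k\ln(n/k))$, so there are two units of $k\ln(n/k)$ that must be overcome before $c$ provides any savings. This is why the lemma requires both $c$ \emph{sufficiently large} and $c' = c'(c)$ \emph{sufficiently small}: smallness of $k/n$ ensures $\ln(n/k)$ is large enough that the linear $2k$ correction, together with the two $k\ln(n/k)$ terms to be absorbed, are all dominated by the $ck\ln(n/k)$ gain from the tail bound. Once these constants are tracked carefully, the final use of $k\ln(n/k) \ge \ln n$ converts the per-subset savings into a genuine $n^{-\Omega(c)}$ bound uniform over all $k \in [1, c'n]$.
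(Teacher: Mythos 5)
Your proof is correct and follows essentially the same route as the paper: a per-subset tail bound derived from the order-statistics density of Lemma~\ref{densitycenter} with $m=n-1$, followed by a union bound over the $\binom{n}{k}$ choices of $U$ at the same threshold $t=\ln(n/k)-\ln\ln(n/k)-\ln c$, and the same final observation that $k\ln(n/k)\ge\ln n$ (equivalently $(n/k)^k\ge n$). The only difference is technical bookkeeping: you bound the CDF in closed form by $\binom{n-1}{k}\bigl(1-e^{-t}\bigr)^{n-1-k}$ by pulling out the monotone factor and integrating $e^{-kx}$, whereas the paper bounds the density pointwise by $\exp\bigl(-\Omega(n e^{-x})\bigr)$ and then integrates; both give the same $\exp\bigl(-\Omega(ck\ln(n/k))\bigr)$ per-subset estimate.
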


\begin{proof}
    Fix $U \subseteq V$ of size $k$ and consider $\cost(U) \sim \sum_{i=k}^{n-1} \Exp(i)$.
    In the following we set $m := n-1$ to shorten notation. We now want to bound $f(x)$ from above
    at $x = \ln\bigl( \frac m{a k} \bigr)$ for a sufficiently large $a$ with $1 \le a \le m/k$
    (such an $a$ exists since $k$ is small enough).
    Plugging in this particular $x$ and using $\binom m  k \le m^k e^k / k^k$ yields
    \[
        f(x) 
         = k \cdot \binom m k  \cdot \frac{a^k k^k (m-ak)^{m-k} }{m^m} 
         \le k (e a)^k \left( 1  -\frac{ak}m \right)^{m-k}.
    \]
    Using $1+x \le e^x$ and $m-k = \Omega(m)$, so that $(m-k)/m = \Omega(1)$, yields
    \begin{align*}
        f(x) 
        & \le k (e a)^k \exp(-\Omega(a k)).
    \end{align*}
    Since $a$ is sufficiently large, the first two factors are lower order terms that we can hide by the $\Omega$.
    Thus, we can simplify this further to
    \begin{align*}
        f(x) 
        & \le \exp(-\Omega(a k)).
    \end{align*}
    Rearranging this using $a = \frac mk e^{-x}$ yields
    \begin{equation}
        f(x) = \exp(-\Omega( m \exp(-x)), \label{kcenterdensityapprox}
    \end{equation}
    which holds for any $x \in [0,\ln\bigl( \frac m{\alpha k} \bigr)]$ for any sufficiently large $\alpha \ge 1$.
    Now we can bound the probability that $\cost(U) < \ln\bigl( \frac m{\alpha k} \bigr)$. This probability is equal to
    \begin{align*}
        \int_0^{\ln( \frac m{\alpha k} )} f(x) \, \text dx
        & = \int_0^{\ln( \frac m{\alpha k} )} f\left(\ln\left( \frac m{\alpha k} \right) - x\right) \, \text dx  \\
        & = \int_0^{\ln( \frac m{\alpha k} )} \exp\bigl(-\Omega( \alpha k \exp(x) )\bigr) \, \text dx  & \text{using \eqref{kcenterdensityapprox}}\\
        & \le \int_0^{\infty} \exp\bigl(-\Omega( \alpha k (1+x) )\bigr) \, \text dx  
         \le \exp\bigl(-\Omega(\alpha k)\bigr)
    \end{align*}
    since $\int_0^{\infty} \exp(-\Omega( \alpha k x )) \, \text dx = O(1/(\alpha k)) \le 1$ as $\alpha$ is 
    sufficiently large.
    
    In order for $\kmedian$ to be less than $\ln\bigl( \frac m{\alpha k} \bigr)$, one of the subsets $U \subseteq V$ of size $k$ has to have cost less than $\ln\bigl( \frac m{\alpha k} \bigr)$. We bound the probability of the latter using the union bound and get
    \begin{align*}
        \probab\left(\kmedian < \ln\left( \frac m{\alpha k} \right)\right)
        & = \probab\left(\exists U \subseteq V, |U| = k \colon \cost(U) < \ln\left( \frac m{\alpha k} \right)\right) \\
        & \le 
        \binom n k \cdot \probab\left(\cost(U) < \ln\left( \frac m{\alpha k} \right)\right) \\
        & \le \binom n k \cdot \exp\bigl(-\Omega(\alpha k)\bigr).
    \end{align*}
    By setting $\alpha = c \ln\bigl(\frac nk \bigr)$ for sufficiently large $c \ge 1$, we fulfill all conditions on $\alpha$. This yields
    \begin{align*}
        \probab\left(\kmedian < \ln\left( \frac nk \right) - \ln \ln\left( \frac nk \right) - \ln c\right)
        & \le \left( \frac{e n}{k} \right)^k \cdot \left(\frac nk \right)^{-\Omega(c k)}.
    \end{align*}
    Since $k$ is sufficiently smaller than $n$, we have $\frac{e n}{k} \le (\frac nk)^2$.
    Thus, for sufficiently large $c$, the right hand side simplifies to $( \frac nk )^{-\Omega(c k)}$. 
    Since $k$ is at least 1 and sufficiently smaller than $n$, we have $(\frac nk)^k \ge n$. Thus, the probability is bounded by $n^{-\Omega(c)}$, which finishes the proof.
\end{proof}

To bound the expected value of the quotient $\trivial/\kmedian$,
we further need to bound the probabilities that $\trivial$ is much too large or $\kmedian$ is much too small. This is achieved by the following two lemmas.

\begin{lemma} \label{lem:medianconst}
    Let $k \le (1-\eps)n$ for some constant $\eps > 0$. 
    Then, for any $c > 0$, we have
    \[
      \probab(\kmedian < c) = O(c)^{\Omega(n)}.
    \]
\end{lemma}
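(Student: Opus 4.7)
The plan is to exploit the explicit distribution of $\cost(U)$ that the excerpt has already set up in Section~\ref{sec:kmedian}. For any fixed $U \subseteq V$ with $|U| = k$ we have $\cost(U) \sim \sum_{i=k}^{n-1} \Exp(i)$, and Lemma~\ref{densitycenter} (with $m = n-1$) gives its density in closed form as $f(x) = k\binom{n-1}{k}e^{-kx}(1-e^{-x})^{n-1-k}$. Since $\kmedian = \min_{|U|=k}\cost(U)$, a union bound will reduce the lemma to a tail estimate for a single $\cost(U)$.

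The first step is to bound $f$ from above near $0$ using $1 - e^{-x} \le x$ and $e^{-kx} \le 1$, giving $f(x) \le k\binom{n-1}{k} x^{n-1-k}$. Integrating from $0$ to $c$ yields
\[
\probab\bigl(\cost(U) \le c\bigr) \;\le\; \frac{k}{n-k}\binom{n-1}{k}\, c^{n-k} \;=\; \binom{n-1}{k-1}\, c^{n-k}.
\]
The second step is the union bound over the $\binom{n}{k}$ subsets of size $k$:
\[
\probab(\kmedian < c) \;\le\; \binom{n}{k}\binom{n-1}{k-1}\, c^{n-k}.
\]

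The third step uses the hypothesis $k \le (1-\eps)n$, which gives the critical exponent $n-k \ge \eps n$. Both binomial coefficients are at most $2^n$, so
\[
\probab(\kmedian < c) \;\le\; 4^n\, c^{\eps n} \;=\; \bigl(4^{1/\eps} c\bigr)^{\eps n},
\]
which is of the form $O(c)^{\Omega(n)}$ as claimed (absorbing the constant $4^{1/\eps}$ into the $O(\cdot)$ and taking the $\Omega(n)$ to be $\eps n$).

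There is no real obstacle: the whole argument is a density/union-bound calculation. The only point that requires a moment of care is that the density bound $f(x) \le k\binom{n-1}{k} x^{n-1-k}$ behaves like a polynomial of degree $n-1-k$ near the origin, whose integral contributes a factor $c^{n-k}$ with an exponent large enough to dominate the $\binom{n}{k}\binom{n-1}{k-1}$ blow-up precisely because $n - k$ is linear in $n$ under the assumption $k \le (1-\eps)n$.
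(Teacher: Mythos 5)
Your proof is correct, but it takes a genuinely different route from the paper's. The paper's own proof is a two-line reduction: since each of the $n-k$ vertices outside the median set must be connected to it by a shortest path, $\kmedian$ is bounded below by the sum of the smallest $n-k$ edge weights, and Lemma~\ref{lem:opt-tail} with $\alpha = \eps$ (whose proof is itself a union bound, over subsets of edges) then gives $\probab(\kmedian < c) \le \bigl(e^2 c/(2\eps^2)\bigr)^{\eps n}$. You instead stay entirely within the Section~\ref{sec:kmedian} machinery: you use the exact law $\cost(U) \sim \sum_{i=k}^{n-1} \Exp(i)$ for fixed $U$, bound the order-statistics density of Lemma~\ref{densitycenter} by $k\binom{n-1}{k}x^{n-1-k}$ near the origin, integrate to get $\probab(\cost(U)\le c) \le \binom{n-1}{k-1}c^{n-k}$, and finish with a union bound over the $\binom nk$ choices of $U$ --- essentially a cruder, constant-threshold variant of the paper's Lemma~\ref{lem:kcentTail} computation. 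Both arguments are sound and both are union bounds in disguise; the paper's route buys brevity by recycling Lemma~\ref{lem:opt-tail}, while yours is self-contained given the growth-process distribution of $\cost(U)$ and avoids the combinatorial step relating median cost to raw edge weights. One small point to make explicit: the step $c^{n-k} \le c^{\eps n}$ uses $c \le 1$; for $c \ge 1$ the claimed bound $O(c)^{\Omega(n)}$ is trivial because the left-hand side is a probability, so nothing is lost, but the case split should be mentioned.
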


\begin{proof}
   Since $n-k$ vertices have to be connected to the $k$-median,
    the cost of the $k$-median is the sum of $n-k$ shortest path lengths. 
    Thus, the cost of the minimal $k$-median is at least the sum of the smallest $n-k$ edge weights $w(e)$.
    We use Lemma~\ref{lem:opt-tail} with $\alpha = \eps$.
\end{proof}

\begin{lemma} \label{lem:TrivUpper}
    For any $c \ge 3$, we have 
    $\probab(\trivial > n^c) \le \exp(-n^{c/3})$.
\end{lemma}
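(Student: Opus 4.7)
The plan is to exploit the distributional identity $\trivial \sim \sum_{i=k}^{n-1}\Exp(i)$ derived in the paragraph preceding Lemma~\ref{lem:ExTrivial}, together with a crude union bound. The target $\exp(-n^{c/3})$ is much weaker than the true tail decay of the sum, so no sharp concentration argument is needed; a simple pigeonhole estimate will do.

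Concretely, if the sum $\sum_{i=k}^{n-1} X_i$ of the independent variables $X_i \sim \Exp(i)$ exceeds $n^c$, then by pigeonhole at least one summand must exceed $n^c/n = n^{c-1}$. A union bound over the at most $n$ summands, combined with $\probab(\Exp(i) > s) = e^{-is} \le e^{-s}$ for $i \ge 1$, then yields
\[
\probab(\trivial > n^c) \;\le\; \sum_{i=k}^{n-1} e^{-i\,n^{c-1}} \;\le\; n \cdot e^{-n^{c-1}} \;=\; \exp\bigl(\ln n - n^{c-1}\bigr).
\]

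All that remains is to compare the exponent $\ln n - n^{c-1}$ with $-n^{c/3}$. For $c \ge 3$ we have $c-1 \ge c/3$ with a gap of at least $2c/3-1 \ge 1$, so $n^{c-1}$ exceeds $n^{c/3} + \ln n$ for every $n$ beyond a small absolute constant, which gives the claim. The only real obstacle is the bookkeeping for tiny $n$, but this is harmless since the statement is inherently asymptotic and the whole computation is a routine exponential-tail estimate.
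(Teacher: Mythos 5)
Your proof is correct, and it takes a genuinely different route from the paper's. You use the distributional identity $\trivial \sim \sum_{i=k}^{n-1}\Exp(i)$ established by the growth process before Lemma~\ref{lem:ExTrivial}, note by pigeonhole that $\trivial > n^c$ forces at least one of the at most $n$ summands to exceed $n^{c-1}$, and union-bound to get $\probab(\trivial > n^c) \le n e^{-n^{c-1}}$; since $c-1 \ge c/3 + 1$ for $c \ge 3$, the comparison $\ln n - n^{c-1} \le -n^{c/3}$ actually holds for every $n \ge 1$ (because $(n-1)\,n^{c/3} \ge \ln n$), so even your caveat about tiny $n$ is unnecessary. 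The paper avoids the sum representation altogether: it bounds $\trivial \le n \max_e\{w(e)\}$ (each distance is at most the weight of the direct edge), views $\max_e\{w(e)\}$ as the maximum of $\binom n2$ independent $\Exp(1)$ variables, and obtains $\probab(\trivial > n^c) \le \binom n2 e^{-n^{c-1}} \le e^{-n^{c-2}} \le e^{-n^{c/3}}$. Both arguments are at heart a union bound over polynomially many exponential variables exceeding $n^{c-1}$; yours gives the slightly tighter intermediate bound $n e^{-n^{c-1}}$ but relies on the structural identity for $\cost(U)$, whereas the paper's is self-contained and needs only the elementary observation that $d(v,u) \le w(\{v,u\})$, which makes it robust even independently of the earlier analysis.
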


\begin{proof}
    We can bound very roughly
    $\trivial \le n \max_e\{w(e)\}$. As $\max_e\{w(e)\}$ is the maximum of $\binom n2$ independent exponentially distributed random variables,
    we have
    \begin{align*}
        \probab\bigl(\trivial \le n^c\bigr)&
        \ge (1-\exp(- n^{c-1}))^{\binom n2} \geq 1 - \binom n2 \cdot \exp(-n^{c-1}) \\
        & \geq 1 - \exp\bigl(-n^{c-2}\bigr) \geq 1 - \exp\bigl(-n^{c/3}\bigr).
    \end{align*}
\end{proof}

\begin{proof}[Proof of Theorem~\ref{thm:kcentre}]
    Let $T = \trivial$ and $C = \kmedian$ for short. We have for any $m \ge 0$
    \begin{align} \label{eq:TCsplit}
        \expected\left( \frac TC \right) \le \expected\left( \frac Tm \right) + \probab(C<m) \cdot \expected\left(\frac TC \bigmid C < m \right).
    \end{align}

\noindent
    \emph{Case 1 ($k \leq c' n$, $c'$ sufficiently small):} Using Lemma~\ref{lem:kcentTail},
    we can pick $c > 0$ such that 
    \begin{align*}
        \probab\left[C < \ln\left( \frac nk \right) - \ln \ln\left( \frac nk \right) - \ln c \right] \le n^{-7}.
    \end{align*}
    Set $m = \ln\left( \frac nk \right) - \ln \ln\left( \frac nk \right) - \ln c$. 
    Then, by Lemma~\ref{lem:ExTrivial}, we have
    \begin{align*}
        \expected\left( \frac Tm \right) \le \frac{ \ln(n/k) + O(1) }m \le 1 + O\left( \frac{\ln \ln(n/k)}{\ln(n/k)}\right).
    \end{align*}
    We show that the second summand of inequality~\eqref{eq:TCsplit} is $O(1/n)$ 
    in the current situation, which shows the claim.
    We have
    \begin{align*}
        \probab(C<m) \cdot \expected\left( \frac TC \bigmid C < m \right)
        &= \probab(C<m) \cdot \int_0^{\infty} \probab\left(\frac TC \ge x \bigmid C < m\right) \, \text dx  \\
        &\le \probab(C<m) \cdot \left( n^6 + \int_{n^6}^{\infty} \probab\left(\frac TC \ge x \bigmid C < m\right) \, \text dx \right)  \\
        &\le n^{-1} + \int_{n^6}^{\infty} \probab\left( \frac TC \ge x \text{ and } C < m\right) \, \text dx  \\
        &\le n^{-1} + \int_{n^6}^{\infty} \probab\left( \frac TC \ge x\right) \, \text dx  \\
        &\le n^{-1} + \int_{n^6}^{\infty} 2\max\left\{\probab\left(T \ge \sqrt{x}\right), \probab\left(C \le \frac{1}{\sqrt{x}}\right)\right\} \, \text dx
    \end{align*}
    since $T/C \ge x$ implies $T \ge \sqrt{x}$ or $C \le 1/\sqrt{x}$.
    Using Lemmas~\ref{lem:medianconst} and~\ref{lem:TrivUpper}, this yields
    \[
        \probab(C<m) \cdot \expected\left( \frac TC \bigmid C < m \right)
        \le n^{-1} + \int_{n^6}^{\infty} 2\max\left\{ \exp\bigl(-x^{1/6}\bigr), O\left(\frac 1{\sqrt{x}}\right)^{\Omega(n)} \right\} \, \text dx 
         = O(1/n).
    \]
    
\noindent
\emph{Case 2 ($c' n < k \le (1-\eps)n$):} We repeat the proof above, now choosing $m$ to be a
    sufficiently small constant. Then $\probab(C < m) = O(m)^{\Omega(n)} \le O(n^{-7})$
    by Lemma~\ref{lem:medianconst}, and we have
    \begin{align*}
        \expected\left( \frac Tm \right) = \frac{\ln(n/k) + O(1)}m = O(1),
    \end{align*}
    since $k > c' n$. Together with the first case, this shows the first claim.
\end{proof}

\section{Concluding Remarks}
\label{sec:open}

\subsection{General Probability Distributions}

Using a coupling argument, Janson~\cite[Section 3]{Janson} proved 
that the results about the length of a fixed edge and the longest edge carry over if the exponential distribution
is replaced by a probability distribution with the following property: the probability that an edge weight is smaller than
$x$ is $x+o(x)$. This property is satisfied, e.g., by the exponential distribution with parameter 1 and by the uniform distribution
on the interval $[0,1]$.
The intuition is that, because the longest edge has a length of $O(\log n/n) = o(1)$,
only the behavior of the distribution in a small, shrinking interval $[0, o(1)]$ is relevant and the $o(x)$ term becomes irrelevant.

We believe that also all of our results carry over to such probability distributions. In fact, we started our research using the uniform distribution
and only switched to exponential distributions because they are technically easier to handle.
However, we decided not to carry out the corresponding proofs because, first, they seem to be technically very tedious and, second,
we feel that they do not add much.

\subsection{Open Problems}

To conclude the paper, let us list the open problems that we consider most interesting:
\begin{enumerate}
\item While the distribution of distances in asymmetric instances does not differ much from the symmetric case, an obstacle in the
   application of asymmetric random shortest path metrics seems to be the lack of clusters of small diameter (see Section~\ref{sec:structure}).
Is there an asymmetric counterpart for this?
\item Is it possible to prove an $1+o(1)$ approximation ratio (like Dyer and Frieze~\cite{DyerFrieze} for the patching algorithm)
   for any of the simple heuristics that we analyzed?
\item What is the approximation ratio of 2-opt in random shortest path metrics?
In the worst case on metric instances, it is $O(\sqrt n)$~\cite{CKT99}.
For independent, non-metric edge lengths drawn uniformly from the interval $[0,1]$, the expected approximation ratio is
$O(\sqrt n \cdot \log^{3/2} n)$~\cite{EM2opt}.
For $d$-dimensional geometric instances, the smoothed approximation ratio is
$O(\phi^{1/d})$~\cite{Englert:2opt}, where $\phi$ is the perturbation parameter.

We easily get an approximation ratio of $O(\log n)$ based
on the two facts that the length of the optimal tour is $\Theta(1)$ with high probability and that $\dmax = O(\log n/n)$
with high probability. Can we prove that the expected ratio of 2-opt is
$o(\log n)$?
\end{enumerate}


\begin{thebibliography}{10}

\bibitem{BBL10}
Louigi Addario-Berry, Nicolas Broutin, and G{\'a}bor Lugosi.
\newblock The longest minimum-weight path in a complete graph.
\newblock {\em Combinatorics, Probability and Computing}, 19(1):1--19, 2010.

\bibitem{kMeans}
David Arthur, Bodo Manthey, and Heiko R{\"o}glin.
\newblock Smoothed analysis of the $k$-means method.
\newblock {\em Journal of the ACM}, 58(5), 2011.

\bibitem{AryakMedian}
Vijay Arya, Naveen Garg, Rohit Khandekar, Adam Meyerson, Kamesh Munagala, and
  Vinayaka Pandit.
\newblock Local search heuristic for $k$-median and facility location problems.
\newblock {\em SIAM Journal on Computing}, 33(3):544--562, 2004.

\bibitem{AusielloEA}
Giorgio Ausiello, Pierluigi Crescenzi, Giorgio Gambosi, Viggo Kann, Alberto
  {Mar\-chet\-ti}-{Spac\-ca\-me\-la}, and Marco Protasi.
\newblock {\em Complexity and Approximation: Combinatorial Optimization
  Problems and Their Approximability Properties}.
\newblock Springer, 1999.

\bibitem{ADS88}
David Avis, Burgess Davis, and J.~Michael Steele.
\newblock Probabilistic analysis of a greedy heuristic for {E}uclidean
  matching.
\newblock {\em Probability in the Engineering and Informational Sciences},
  2:143--156, 1988.

\bibitem{Azar:InsertionLowerTSP:1994}
Yossi Azar.
\newblock Lower bounds for insertion methods for {TSP}.
\newblock {\em Combinatorics, Probability and Computing}, 3:285--292, 1994.

\bibitem{BhamidiEA:FirstPassageFinite:2010}
Shankar Bhamidi, Remco van~der Hofstad, and Gerard Hooghiemstra.
\newblock First passage percolation on random graphs with finite mean degrees.
\newblock {\em Annals of Applied Probability}, 20(5):1907--1965, 2010.

\bibitem{ErdosFPP}
Shankar Bhamidi, Remco van~der Hofstad, and Gerard Hooghiemstra.
\newblock First passage percolation on the {E}rd{\H o}s-{R}{\'e}nyi random
  graph.
\newblock {\em Combinatorics, Probability and Computing}, 20(5):683--707, 2011.

\bibitem{Universality}
Shankar Bhamidi, Remco van~der Hofstad, and Gerard Hooghiemstra.
\newblock Universality for first passage percolation on sparse random graphs.
\newblock Technical Report 1210.6839 [math.PR], arXiv, 2012.

\bibitem{BS10}
Nathaniel~D. Blair-Stahn.
\newblock First passage percolation and competition models.
\newblock Technical Report 1005.0649v1 [math.PR], arXiv, 2010.

\bibitem{BH57}
S.~R. Broadbent and J.M. Hammersley.
\newblock Percolation processes. {I}. {Crystals} and mazes.
\newblock {\em Proceedings of the Cambridge Philosophical Society},
  53(3):629--641, 1957.

\bibitem{CKT99}
Barun Chandra, Howard~J. Karloff, and Craig~A. Tovey.
\newblock New results on the old $k$-opt algorithm for the traveling salesman
  problem.
\newblock {\em SIAM Journal on Computing}, 28(6):1998--2029, 1999.

\bibitem{DavisPrieditis:ExpectedShortestPath:1993}
Robert Davis and Armand Prieditis.
\newblock The expected length of a shortest path.
\newblock {\em Information Processing Letters}, 46(3):135--141, 1993.

\bibitem{DyerEA:GreedyMatching:1993}
Martin Dyer, Alan Frieze, and Boris Pittel.
\newblock The average performance of the greedy matching algorithm.
\newblock {\em Annals of Applied Probability}, 3(2):526--552, 1993.

\bibitem{DyerFrieze}
Martin~E. Dyer and Alan~M. Frieze.
\newblock On patching algorithms for random asymmetric travelling salesman
  problems.
\newblock {\em Mathematical Programming}, 46:361--378, 1990.

\bibitem{EGRHN12}
Maren Eckhoff, Jesse Goodman, Remco van~der Hofstad, and Francesca~R. Nardi.
\newblock Short paths for first passage percolation on the complete graph.
\newblock {\em Journal of Statistical Physics}, 151(6):1056--1088, 2013.

\bibitem{EM2opt}
Christian Engels and Bodo Manthey.
\newblock Average-case approximation ratio of the 2-opt algorithm for the
  {TSP}.
\newblock {\em Operations Research Letters}, 37(2):83--84, 2009.

\bibitem{Englert:2opt}
Matthias Englert, Heiko R{\"o}glin, and Berthold V{\"o}cking.
\newblock Worst case and probabilistic analysis of the {2-Opt} algorithm for
  the {TSP}.
\newblock {\em Algorithmica}, 68(1):190--264, 2014.

\bibitem{Frieze04}
Alan~M. Frieze.
\newblock On random symmetric travelling salesman problems.
\newblock {\em Mathematics of Operations Research}, 29(4):878--890, 2004.

\bibitem{FriezeGrimmett}
Alan~M. Frieze and G.~R. Grimmett.
\newblock The shortest-path problem for graphs with random arc-lengths.
\newblock {\em Discrete Applied Mathematics}, 10:57--77, 1985.

\bibitem{HassinZemel:ShortestPath:1985}
Refael Hassin and Eitan Zemel.
\newblock On shortest paths in graphs with random weights.
\newblock {\em Mathematics of Operations Research}, 10(4):557--564, 1985.

\bibitem{HofstadEA:FirstPassageRandom:2001}
Remco van~der Hofstad, Gerard Hooghiemstra, and Piet van Mieghem.
\newblock First passage percolation on the random graph.
\newblock {\em Probability in the Engineering and Informational Sciences},
  15(2):225--237, 2001.

\bibitem{Hofstad06}
Remco van~der Hofstad, Gerard Hooghiemstra, and Piet van Mieghem.
\newblock Size and weight of shortest path trees with exponential link weights.
\newblock {\em Combinatorics, Probability and Computing}, 15(6):903--926, 2006.

\bibitem{Janson}
Svante Janson.
\newblock One, two, three times $\log n/n$ for paths in a complete graph with
  edge weights.
\newblock {\em Combinatorics, Probability and Computing}, 8(4):347--361, 1999.

\bibitem{TSPExperiments}
David~S. Johnson and Lyle~A. McGeoch.
\newblock Experimental analysis of heuristics for the {STSP}.
\newblock In Gregory Gutin and Abraham~P. Punnen, editors, {\em The Traveling
  Salesman Problem and its Variations}, chapter~9. Kluwer, 2002.

\bibitem{Karp77}
Richard~M. Karp.
\newblock Probabilistic analysis of partitioning algorithms for the
  traveling-salesman problem in the plane.
\newblock {\em Mathematics of Operations Research}, 2(3):209--224, 1977.

\bibitem{KarpSteele}
Richard~M. Karp and J.~Michael Steele.
\newblock Probabilistic analysis of heuristics.
\newblock In Eugene~L. Lawler, Jan~Karel Lenstra, Alexander H. G.~Rinnooy Kan,
  and David~B. Shmoys, editors, {\em The Traveling Salesman Problem: A Guided
  Tour of Combinatorial Optimization}, pages 181--205. Wiley, 1985.

\bibitem{Switching}
Walter Kern.
\newblock A probabilistic analysis of the switching algorithm for the {TSP}.
\newblock {\em Mathematical Programming}, 44(2):213--219, 1989.

\bibitem{FPPInhomogeneous}
Istv{\'a}n Kolossv{\'a}ry and J{\'u}lia Komj{\'a}thy.
\newblock First passage percolation on inhomogeneous random graphs.
\newblock Technical Report 1201.3137v1 [math.PR], arXiv, 2012.

\bibitem{KulkarniAdlakha:FlowExponential:1985}
V.~G. Kulkarni and V.~G. Adlakha.
\newblock Maximum flow in planar networks in exponentially distributed arc
  capacities.
\newblock {\em Communications in Statistics. Stochastic Models}, 1(3):263--289,
  1985.

\bibitem{Kulkarni:SP:1986}
Vidyadhar~G. Kulkarni.
\newblock Shortest paths in networks with exponentially distributed arc
  lengths.
\newblock {\em Networks}, 16(3):255--274, 1986.

\bibitem{Kulkarni:MST:1988}
Vidyadhar~G. Kulkarni.
\newblock Minimal spanning trees in undirected networks with exponentially
  distributed arc weights.
\newblock {\em Networks}, 18(2):111--124, 1988.

\bibitem{TwoOptGauss}
Bodo Manthey and Rianne Veenstra.
\newblock Smoothed analysis of the {2-Opt} heuristic for the {TSP}: Polynomial
  bounds for {Gaussian} noise.
\newblock In Leizhen Cai, Siu-Wing Cheng, and Tak-Wah Lam, editors, {\em Proc.
  24th Int. Symp. on Algorithms and Computation (ISAAC)}, volume 8283 of {\em
  Lecture Notes in Computer Science}, pages 579--589. Springer, 2013.

\bibitem{PeresEA}
Yuval Peres, Dmitry Sotnikov, Benny Sudakov, and Uri Zwick.
\newblock All-pairs shortest paths in {$O(n^2)$} time with high probability.
\newblock {\em Journal of the ACM}, 60(4):26, 2013.

\bibitem{ReingoldTarjan}
Edward~M. Reingold and Robert~Endre Tarjan.
\newblock On a greedy heuristic for complete matching.
\newblock {\em SIAM Journal on Computing}, 10(4):676--681, 1981.

\bibitem{Rosenkrantz77}
Daniel~J. Rosenkrantz, Richard~E. Stearns, and Philip~M. {Lewis II}.
\newblock An analysis of several heuristics for the traveling salesman problem.
\newblock {\em SIAM Journal on Computing}, 6(3):563--581, 1977.

\bibitem{Ross}
Sheldon~M. Ross.
\newblock {\em Introduction to Probability Models}.
\newblock Academic Press, 10th edition, 2010.

\bibitem{SupowitEA:HeuristicsMatching:1980}
Kenneth~J. Supowit, David~A. Plaisted, and Edward~M. Reingold.
\newblock Heuristics for weighted perfect matching.
\newblock In {\em Proc. of the 12th Annual ACM Symposium on Theory of Computing
  (STOC)}, pages 398--419. ACM, 1980.

\bibitem{Ver04}
A.~M. Vershik.
\newblock Random metric spaces and universality.
\newblock {\em Russian Mathematical Surveys}, 59(2):259--295, 2004.

\bibitem{Yukich1998}
Joseph~E. Yukich.
\newblock {\em Probability Theory of Classical {E}uclidean Optimization
  Problems}, volume 1675 of {\em Lecture Notes in Mathematics}.
\newblock Springer, 1998.

\end{thebibliography}

\end{document}